\theoremstyle{plain}
\newtheorem{thm}{Theorem}[section]
\newtheorem{lem}[thm]{Lemma}
\theoremstyle{definition}
\newtheorem{defn}[thm]{Definition}
\newtheorem{ex}[thm]{Example}
\author{Thomas Brihaye
\institute{University of Mons\\
Mons, Belgium}
\email{thomas.brihaye@umons.ac.be}
\and
Sophie Pinchinat
\institute{Université de Rennes, IRISA\\
Rennes, France}
\email{sophie.pinchinat@irisa.fr}
\and
Alexandre Terefenko
\institute{Université de Rennes, IRISA\\ Rennes, France}
\institute{University of Mons\\
Mons, Belgium}
\email{alexandre.terefenko@irisa.fr}
}
\title{Adversarial Formal Semantics of Attack Trees and Related Problems}
\date{\today}
\begin{document}

\maketitle

\begin{abstract}
Security is a subject of increasing attention in our actual society in order to protect critical resources from information disclosure, theft or damage. The informal model of attack trees introduced by Schneier, and widespread in the industry, is advocated in the 2008 NATO report to govern the evaluation of the threat in risk analysis. Attack-defense trees have since been the subject of many theoretical works addressing different formal approaches.

In 2017, M. Audinot et al. introduced a path semantics over a transition system for attack trees. Inspired by the latter, we propose a two-player  interpretation of the attack-tree formalism. To do so, we replace transition systems by concurrent game arenas and our associated semantics consist of strategies. We then show that the emptiness problem, known to be \NP-complete for the path semantics, is now \PSPACE-complete. Additionally, we show that the membership problem is \coNP-complete for our two-player interpretation while it collapses to \PTIME in the path semantics.
\end{abstract}

\section{Introduction}
\label{sec:introduction}

Security is a subject of increasing attention in our actual society in
order to protect critical resources from information disclosure, theft
or damage. The informal model of attack trees was first introduced by
Schneier \cite{schneier1999attack} to schematically model possible
threats one could execute against an information system. Attack trees
have then been widespread in the industry and are advocated in the 2008
NATO report to govern the evaluation of the threat in risk
analysis. The attack tree model is also a subject of increasing attention in the
community of formal methods with a lot of different formal
approaches \cite
{mauw2005foundations,kordy2010foundations,jurgenson2008computing,
jhawar2015attack,horne2017semantics,
pinchinat2019attack,audinot2017my} (see the survey \cite{widel2019beyond}).

The first formal model of attack trees introduced
in \cite{schneier1999attack} aimed at describing a possible attack
over a system by refining the main attack goal into sub-goals using
either an operator $OR$ or an operator $AND$ to coordinate those
refinements. The analysis conducted over those trees is "static" in
the sense that the attacked system does not evolve during the
attack. As such, there is no concept of a goal happening before or
after another. In \cite{jhawar2015attack}, the authors introduce a first
formal semantics that can be qualified as "dynamic" by allowing a new
operator, operator $SAND$ (for sequential $AND$), to specify that
sub-goals must be attained in a given order. Considering that the $SAND$
operator is now commonly accepted, the authors of \cite{audinot2017my} propose a
path semantics for attack trees over a transition system.

In this paper, our goal is to present a new semantics for attack trees
in order to be able to model more realistic scenarios: we want our
attacker to be able to adapt her actions according to the behavior of
the environment -- typically, a defender who tries to protect the
system. This setting naturally yields a two-player semantics. Our
approach is inspired by \cite{audinot2017my} where we generalize the
path semantics to a game-theoretic framework, yielding a strategy
semantics for attack trees, without changing their syntax. 

While the path semantics from \cite{audinot2017my} is compositional in
a natural way, it turns out that the strategy semantics does not have
such a nice property. Indeed, although composition of strategies is
possible (see for example \cite{paul2015automata}), there is no
immediate solution to compose two strategies in order to get a
strategy that achieves the disjunction of what the formers
achieve. This situation makes it difficult to design a compositional strategy semantics for attack trees. We therefore develop a non-trivial strategy semantics
that is not compositionally obtained \emph{per se}, but that makes use of the
former compositional path semantics.

To our knowledge, our proposal is the first game-theoretic semantics
for attack trees, and should not be mixed-up with the multi-player
setting induced by the so-called classic model of \emph{attack-defense
trees} in the literature (see \cite{kordy2014attack}): attack-defense
trees are attack trees equipped with a new operator to express that
some defender could use a countermeasure to prevent attacker from
achieving her goal. Although well-understood in a "static" framework,
we are not aware of any formal semantics of attack-defense trees for
the ``dynamic'' one, \ie over transition systems. This missing piece
of work makes it difficult to conduct a comparison with our
contribution, but, from the fact that the defender in our setting is fully
formalized, as an opponent in the game arena, while the defender in
attack-defense trees takes the form of an abstract entity, it seems that those two formalisms are not expressing the same kind of problems.

Our contribution in this paper is twofold.

We first develop a clean mathematical setting to obtain a formal
strategy semantics for attack trees. For pedagogical reasons, we
choose to consider a simplified version of the attack trees
of \cite{audinot2017my} where atomic goals (at the leaves of the
trees) are reachability goals with no preconditions. However, at
the price of tedious definitions, the strategy semantics we propose
can be adapted to atomic goals with preconditions. Regarding the
design of this semantics, we heavily rely on the older one
of \cite{audinot2017my} based on paths, and we justify our approach by
providing evidence that a compositional strategy semantics is
hopeless.

Second, we exploit the attack tree semantics to address and study the
complexity of two decision problems: the \emph{non-emptiness problem}
and the \emph{membership problem}. The former consists in determining
if the semantics of an input tree is non-empty, while the latter
consists in determining if an input element belongs to the semantics of
an input tree. Our results are summarized in \Cref{table:results},
where we distinguish between the path semantics and the strategy
semantics. 
\begin{table}[h]
\begin{center}
\caption{Complexity results\label{table:results}}
\begin{tabular}{|c|c|c|}
\cline{2-3}
\multicolumn{1}{c|}{} & Paths semantics & Strategy semantics\\
 \hline
 Non-Emptiness Problem & \NP-complete & \PSPACE-complete\\
 \hline
 Membership Problem & \PTIME & \coNP-complete\\
 \hline
\end{tabular}
\end{center}
\end{table}

Importantly, both decision problems have a practical counterpart. The
non-emptiness of the path semantics of a tree reflects a situation
where there exists a favorable scenario for attacker to perform her
attack, while the non-emptiness of the strategy semantics reflects an
intrinsic vulnerability of an information system. Regarding the
membership problem for the path semantics, we are interested in
knowing whether a log file of some information system execution makes
evidence of an attack, while for the strategy semantics, we wonder if
an attack policy is successful.

The paper is organized as follows. We start in \Cref{sec:example}
with an introductory example explaining informally the difference
between path semantics and strategy semantics. After some background work in \Cref{sec:preliminaries}, we 
introduce in \Cref{sec:ATsemantics} the formal model of attack trees
and define the path semantics inspired by \cite{audinot2017my} as well
as our new strategy semantics. We then study the complexity of the
Non-Emptiness and the Membership problems
in \Cref{sec:decisionproblems}.

\section{Introductory example}
\label{sec:example}
Consider a thief (the attacker) who wants to steal some document inside a safe of a building without being seen. The building is composed of two rooms. The first room has two entrance doors (called door 1 and door 2) from the street. There is a guard keeping the entrance doors, but he can only control the bypassing of one of the two entrance doors at a time. The first room has also another door that leads to the second room. This door is locked but the key to unlock it is in the first room.  There is also a camera in the first room monitoring the door to the second room. The second room contains a safe with the document that the thief wants to steal. Therefore, in order for the thief to attain his goal, he needs to enter the first room (by either door that is not currently controlled by the guard), then deactivate the camera and collect the key (in whichever order he wants) and finally unlock and go through the door leading to the second room. The thief can be seen by the guard if they appen to be in front of the same door or by the camera if activated when he is in front of the door to room 2. Figure \ref{Fig:build} gives a picture of the situation where the thief is still outside of the building and the guard controls the second door.

\begin{figure}[h]
   \centering
   \subfloat[Building plan]{
     \def\svgwidth{100pt}
     \input{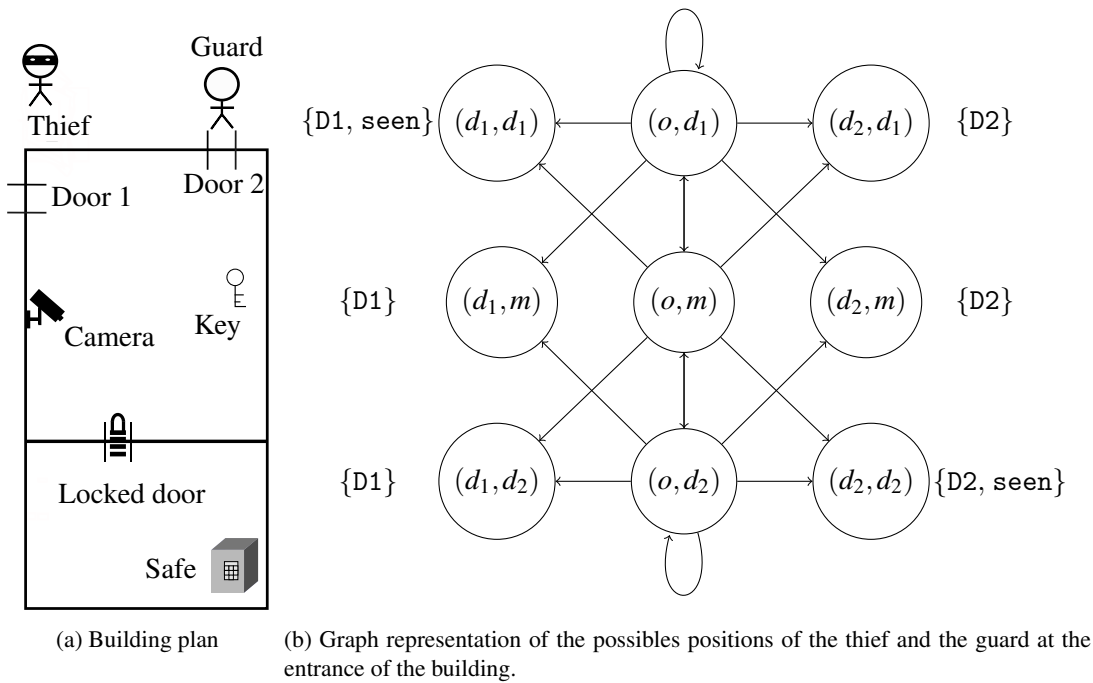}
     \label{Fig:build}
   }
   \subfloat[Graph representation of the possibles positions of the thief and the guard at the entrance of the building.]{
\begin{tikzpicture}[scale=0.5]

   \node[state] (q)   {$(o,m)$};
   \node[state] (q0)  [above=of q] {$(o,d_1)$}; 
   \node[state] (q1) [below=of q] {$(o,d_2)$};
   \node[state] (q2) [right=of q0] {$(d_2,d_1)$};
   \node[state] (q3) [right =of q1] {$(d_2,d_2)$};
   \node[state] (q4) [right=of q] {$(d_2,m)$};
   \node[state] (q2') [left=of q0] {$(d_1,d_1)$};
   \node[state] (q3') [left =of q1] {$(d_1,d_2)$};
   \node[state] (q4') [left=of q] {$(d_1,m)$};
   \node (blu) at (8,4.8) {$\{\propdtwo\}$}; 
   \node (blu) at (8,0) {$\{\propdtwo\}$};   
   \node (blu) at (8.4,-4.8) {$\{\propdtwo,$ \propseen $\}$};
   \node (blu) at (-8.4,-4.8) {$\{\propdone\}$}; 
   \node (blu) at (-8.4,0) {$\{\propdone\}$};   
   \node (blu) at (-8.4,4.8) {$\{\propdone,$ \propseen $\}$};  
   
   \path[->] 
    (q) edge node {} (q1)
    edge node {} (q0)
    edge node {} (q2)
    edge node {} (q3)
    edge node {} (q2')
    edge node {} (q3')
    (q0) edge node {} (q2)
    edge node {} (q4)
    edge node {} (q2')
    edge node {} (q4')
    edge node {} (q)
    edge [loop above] node {} ()
    (q1) edge node {} (q3)
    edge node {} (q4)
    edge node {} (q3')
    edge node {} (q4')
    edge node {} (q)
	edge [loop below] node {} ()
         ;
\end{tikzpicture}
\label{ga}
   }
   \caption{Introductory example building}
   \label{blublublu}
\end{figure}

In security, it is common to use an attack tree to model the goal of the attacker. An attack tree is a tree where each node describes a goal and the children of a node describe a refinement into sub-goals of the parent goal. To model those refinements, it is common to use three operators:
\begin{itemize}
\item $OR$ operator means that at least one sub-goal needs be achieved to have the goal accomplished,
\item $SAND$ operator, read "sequential and", means that the sub-goals need be achieved in the left-to-right order to have the goal accomplished, 
\item $AND$ operator means that the sub-goals need be achieved (in whatever order) to have the goal accomplished.
\end{itemize}

In Figure \ref{Fig:att}, we describe the goal of the thief by means of an attack tree. To distinguish the different types of nodes, we draw a curved line below an $AND$ operator, a curved arrow below a $SAND$ operator and nothing below an $OR$ operator.

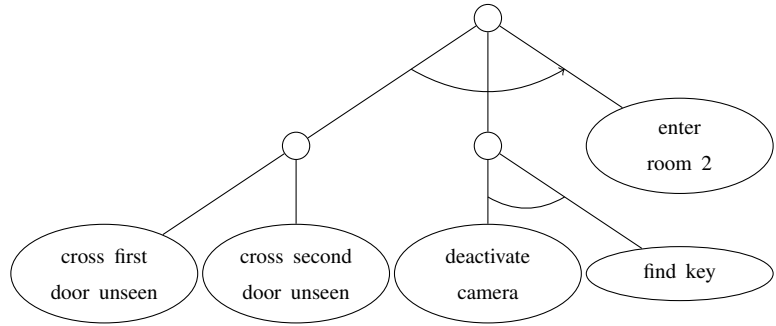
\begin{figure}
\begin{center}
\begin{tikzpicture}[scale=0.85,every text node part/.style={align=center}, node distance=3cm]
\node [draw, ellipse] (p) at (0,0) {};
\node [draw, ellipse] (q) at (-3,-2) {};
\node [draw, ellipse, text width = 1.5cm] (r) at (3,-2) {\scriptsize {enter room 2}};
\node [draw, ellipse] (truc) at (0,-2) {};
\draw (q) -- (p) -- (r);
\draw[->] (-1.2,-0.8) to[bend right] (1.2,-0.8);
\node[draw, ellipse, text width = 1.5cm] (s) at (-6,-4) {\scriptsize {cross first door unseen}};
\node[draw, ellipse, text width = 1.5cm] (t) at (-3,-4) {\scriptsize {cross second door unseen}};
\node[draw, ellipse, text width = 1.5cm] (u) at (0,-4) {\scriptsize {deactivate camera}};
\draw (s) -- (q) -- (t);
\draw (truc)--(p);
\draw[-] (0, -2.8) to[bend right] (1.2,-2.8);
\node[draw, ellipse, text width = 1.5cm] (v) at (3,-4) {\scriptsize {find key}};
\draw (u)--(truc) -- (v);
\end{tikzpicture}
\end{center}


\caption{An attack tree to model the goal of our thief.}
\label{Fig:att}
\end{figure}

We focus our attention on the very first part of our problem, that is, the sub-goal of the thief to enter the building. We start by fixing a set of proposition $\Prop=\{\propdone, \propdtwo, $\propseen$\}$ where $\propdone$ holds when the thief is in front of the first door, $\propdtwo$ holds when the thief is in front of the second door and \propseen\space holds when the guard sees the thief because they meet in front of the same door. The goal "crossing the first door unseen" of the thief can be modelled by the formula $\propdone \land \lnot\propseen$ and the goal "crossing the second door unseen" is modelled by the formula $\propdtwo \land \lnot \propseen$. If we assume that the guard can switch door whenever he wants but leaves the two doors unguarded for a brief amount of time during his motion, we can model the situation using the graph of Figure \ref{ga} where each state consists of a pair in the set $\{o, d_1, d_2\} \times \{m, d_1, d_2\}$. For a pair $(a,b)$, element $a$ determines the position of the thief ($o$ means that he is still outside the building, while $d_i$ indicates he is at door $i$) and element $b$ determines the position of the guard ($m$ means that he is currently in motion between the two doors, leaving them both unguarded). We write next to each state which propositions hold in it.

In a path semantics, a successful attack for goal $\propdone \land \lnot \propseen$ consists of a sequence of states (\ie a word) such that the valuation of the last state of this sequence satisfies formula $\propdone \land \lnot \propseen$. In particular, the sequence $(o,m), (d_1,d_2)$ is a successful attack. Now, if we want to consider a successful attack for the attack tree consisting of the "OR" operator of objective $\propdone \land \lnot \propseen$ and $\propdtwo \land \lnot \propseen$, it is enough to consider the union of the set of all attacks for objective $\propdone \land \lnot \propseen$ with the set of all attacks for objective $\propdtwo \land \lnot \propseen$.

However, in the strategy semantics we introduce in this paper, we consider that an attack is successful if the attacker has a strategy that grants him to reach the states he needs to attain, independently of the environment. In our example, we can see that the thief has no strategy starting at position $(o,m)$ to achieve goal $\propdone \land \lnot \propseen$. Indeed, if the first move of the guard consists on going to door 1 and he then does not move any more, there is no way for the thief to cross the first door while unseen. Similarly, there is no strategy starting at position $(o,m)$ to achieve goal $\propdtwo \land \lnot \propseen$. However, if we consider the "OR" operator of the two goals $\propdone \land \lnot \propseen$ and $\propdtwo \land \lnot \propseen$, the strategy of the thief consisting in waiting for the guard to go to one of the two doors and then going to door 1 if the guard is at door 2 and vice versa is a successful strategy. Later, we will use similar a similar example to show that a compositional definition for a strategy semantics cannot be achieved.

%
%

\section{Preliminary notions}
\label{sec:preliminaries}
\paragraph{Formal languages}
Given an alphabet (\ie a finite set of symbols) $\alphabet$, notation $\alphabet^*$
represents the set of finite words (\ie sequences of symbols) over
the alphabet $\alphabet$, with typical element $\word=\letter_1 \ldots \letter_n \in \alphabet^*$; the empty word is written $\eword$. On
the other hand, the set of infinite words is denoted by
$\alphabet^\omega$. A subset $L \subseteq \alphabet^*$ is a language. 
%
Given a word $\word=\letter_1 \ldots \letter_n \in \alphabet^*$, its set of
prefixes is the language $\Prefixes(\word)=\{\letter_1 ... \letter_i
| i\leq n\} \cup \{\eword\}$, and we write 
$\word' \prefix \word$ whenever $\word' \in \Prefixes(\word)$.
%
A language $\lang$ is prefix-closed if $\word \in \lang$ implies $\word'\in \lang$ for every $\word' \prefix \word$.
The concatenation of a word $\word=\letter_1 \ldots \letter_n$ with another word $\word'=\letter_1' \ldots \letter_m'$ is the word $\word\word'=\letter_1 \ldots \letter_n\letter_1' \ldots \letter_m'$. The concatenation of a word with a language is defined in the usual
way: for $\word \in \alphabet^*$ and $\lang \subseteq \alphabet^*$,
we let $\word\lang:=\{\word\word'| \word' \in \lang\}$. 

\paragraph{Game theory and game arena}
To formalize a strategy semantics for attack trees, we need standard two-player, zero-sum, perfect information games that we recall here.

A \emph{game arena} is a finite graph on which two players play a game of unbounded duration. We choose to consider concurrent games, meaning that, at each round, each player makes an action. To define it properly, we consider two players called Player $1$ and Player $2$ and a finite set of propositions $\Prop$. 

\begin{defn} 
A \emph{two-player game arena} is a tuple $\arenaa$ where:
\begin{itemize}
\item $\Pos=\{\pos, ...\}$ is a finite set of positions.
\item $\Act=\Act_1 \times \Act_2$ is a finite set of actions, as a product of the sets of actions of each player.
\item $\arenaf: \Pos  \times \Act \to \Pos$ is a transition function,
\item $\arenav:\Pos \to \mathcal{P}(\Prop)$ is a valuation function.
\end{itemize}
\end{defn}

In our introductory example of \Cref{sec:example}, if we consider the set of actions for the thief: $\{$wait, go-to-Door-1, go-to-Door-2$\}$ and the following set of actions for the defender: $\{$stay-at-current-door, leave-current-door, go-to-Door-1, go-to-Door-2$\}$, then it is easy to see that the graph drawn in Figure \ref{ga} forms a game arena.

For the rest of this paper, we fix a game arena $\arenaa$. 

%
%

For a game position $\pos \in \Pos$, we define
$\nextpos{\pos}=\{\pos'\in \Pos| $ $\arenaf(\pos, \act)=\pos'$ for
some $\act \in \Act \}$ the set of all positions reachable from $\pos$
in one step. For convenience, we assume that each player $\player$ can play every action $\act$ in $\Act_\player$ at each position of the game arena, so that for each
position $\pos \in \Pos$, we have $\nextpos{\pos} \neq \emptyset$. A play $\play$ is an infinite sequence of positions of the form
$\pos_0 \pos_1 \pos_2 .... \in \Pos^\omega$ such that for each
$i \in \mathbb{N}$, there is $\act \in \Act$ such that
$ \arenaf(\pos_i,\act)= \pos_{i+1}$. For $i \in \mathbb{N}$, we let
$\play_i=\pos_i$ be the $i^{th}$ position of play $\play$. The set
of all plays is denoted by $\Plays(\arena)$. Each non-empty prefix
$\hist$ of a play is called a \emph{history} and the set of all
histories is denoted by $\Hist(\arena)$. For a history
$\hist \in \Hist(\arena)$, we define $last(\hist)$ as the last
position of $\hist$. For $\pos \in \Pos$, we also use the notations
$\Plays(\arena, \pos)$ and $\Hist(\arena, \pos)$ to denote the set of
all plays starting from $\pos$ (\ie $\play_0=\pos$) and the set of all histories starting
from $\pos$, respectively.

Winning plays for Player $1$ are obtained from a distinguished subset $\win_1 \subseteq \Plays(\arena)$. As we consider zero-sum games, all plays in $\win_1\backslash \Plays(\arena)$ are winning for Player $2$.

Classically, we introduce the notion of \emph{strategy}, as a map prescribing how a player plays depending on the current history: a \emph{strategy} $\strat^\player$ for the Player $\player$ is a map $\strat^\player:\Hist(\arena) \to \Act_i$. The set of all strategies for the Player $\player$ is denoted as $Strat^\player$.

A history $\hist=\pos_0\pos_1...\pos_m$ is \emph{consistent} with a strategy $\strat^\player$ if for each $1 \leq i \leq m$, there exists $\act=(\act_1, \act_2) \in \Act$ such that we have $\arenaf(\pos_i,\act)=\pos_{i+1}$ and $\strat^\player(\pos_0\pos_1...\pos_i)=\act_\player$. We say that a play $\play$ is consistent with a strategy $\strat^\player$ if all prefixes of $\play$ are histories consistent with $\strat^\player$. The set of all plays consistent with $\strat^\player$ is denoted by $\out(\strat^\player)$. From a game position $\pos$ we say that a strategy is \emph{winning} if all outcomes starting at $\pos$ are winning. In other words, for a strategy of say Player $1$, a strategy $\strat$ is winning if $\out(\strat) \cap \Plays(\arena, \pos) \subseteq \win_1$.

A classic kind of concurrent games are the \emph{reachability games}. In such games, a player wants to reach some positions while the other player tries to prevent it from happening. These games are clearly zero-sum. More formally, we say that a game is a reachability game for Player $1$ if there exists $\reachwin \subseteq \Pos$ such that $\win_1=\{\play \in Plays(\arena)| \play_i \in \reachwin$ for some $i \in \mathbb{N}\}$.

In a game arena, we says that a position $\pos$ satisfies the formula $\formula$ if its valuation $\arenav(\pod)$ satisfies the formula in classic propositional logic, denoted by $\pos \models \formula$. Note that a Boolean formula $\formula$ over $\Prop$ describes the reachability game $(\arena, \reachwin)$ where $\reachwin=\{\pos \in \Pos|\pos \models \formula\}$.

\section{Attack trees and their semantics}
\label{sec:ATsemantics}

In this section, we start with the formal definition of attack tree used in this paper. Then we develop two semantics for attack trees: the path semantics and the strategy semantics.

\subsection{Syntax of attack trees}
\label{sec:AT}
To formalize attack trees, we start by fixing a set of propositions $\Prop$.

\begin{defn}
An \emph{attack tree} $\att$ over $\Prop$ is:
\begin{itemize}
\item either a leaf composed of a unique Boolean formula $\formula$ over $\Prop$,
\item or an expression $OP(\att_1,...,\att_n)$ where $OP$ ranges over $OR$, $AND$ and $SAND$ and $\att_1, ...,\att_n$ are attack trees.
\end{itemize}
\end{defn}

We define the size of an attack tree $\att$, noted $\sizeatt{\att}$, by the number of its nodes.

\begin{ex}
\label{ex:sgattacktree}
We will formalise the example introduced in Section \ref{sec:example}. To represent the situation, we use the following set of propositions:  $\Prop=\{\propdone,\propdtwo, \propseen, \texttt{C},\texttt{K},\texttt{R2}\}$. We have that $\propdone$ holds when the thief has crossed the first door, $\propdtwo$ holds when the thief has crossed the second door, $\propseen$ holds when the guard sees the thief, $\texttt{C}$ holds when the camera is on, $\texttt{K}$ holds when the thief has the key and finally $\texttt{R2}$ holds when the thief is in the second room.

We can now  propose a formal definition for the attack tree in Figure \ref{Fig:att}: $\att= \break SAND(OR(\propdone \land \lnot \propseen, \propdtwo \land \lnot \propseen), AND(\texttt{C},\texttt{ K}), \texttt{R2})$ to model the objective of the attacker. The graph representation of $\att$ is given by Figure \ref{Fig:sgattacktree}.

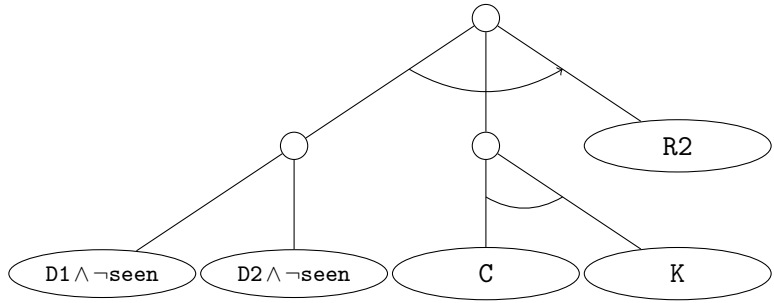
\begin{figure}
\begin{center}
\begin{center}
\begin{tikzpicture}[scale=0.85,every text node part/.style={align=center}]
\node [draw, ellipse] (p) at (0,0) {};
\node [draw, ellipse] (q) at (-3,-2) {};
\node [draw, ellipse, text width = 1.5cm] (r) at (3,-2) { $\texttt{R2}$};
\node [draw, ellipse] (truc) at (0,-2) {};
\draw (q) -- (p) -- (r);
\draw[->] (-1.2,-0.8) to[bend right] (1.2,-0.8);
\node[draw, ellipse, text width = 1.5cm] (s) at (-6,-4) {\footnotesize{$\propdone \land \lnot \propseen$}};
\node[draw, ellipse, text width = 1.5cm] (t) at (-3,-4) { \footnotesize{$\propdtwo \land \lnot \propseen$}};
\node[draw, ellipse, text width = 1.5cm] (u) at (0,-4) { $\texttt{C}$};
\draw (s) -- (q) -- (t);
\draw (truc)--(p);
\draw[-] (0, -2.8) to[bend right] (1.2,-2.8);
\node[draw, ellipse, text width = 1.5cm] (v) at (3,-4) { $\texttt{K}$};
\draw (u)--(truc) -- (v);
\end{tikzpicture}
\end{center}
\end{center}
\caption{Formal version of the attack tree in Figure \ref{Fig:att}.}
\label{Fig:sgattacktree}
\end{figure}

\end{ex}

Let us notice that the attack trees used in \cite{audinot2017my} are in fact slightly different: the leaves of the attack trees of that paper are of the form $<\formula_1, \formula_2>$ with $\formula_1$ and $\formula_2$ two Boolean formulas. Formula $\formula_1$ describes a precondition for the objective to begin with and formula $\formula_2$ describes the postcondition for the objective to be granted. However, in this paper, we never consider preconditions. So a leaf $\formula$ of our attack trees can be seen as a leaf $<true, \formula>$ of attack trees introduced in \cite{audinot2017my}. Although the semantics defined in this paper could also be defined for attack trees with preconditions, not considering them makes the setting more pedagogical.

The first semantics for attack trees we introduce is a path semantics inspired by \cite{audinot2017my}. Informally, in the path semantics, we consider all sequences of events that lead to a successful attack. The idea is to determine which scenarios are favourable for the attacker. One could also say that an attack can occur if the attacker is lucky.

The second semantics is our main contribution, and is named the \emph{strategy semantics} for attack trees. In this approach, the attacker should not rely on an opportunity offered by the environment but should be able to find the right sequence of actions whatever the environment does. Otherwise said, an attack is not a favourable scenario anymore but a winning strategy for attacker in some two-player game arena.

\subsection{Path semantics for attack trees}
\label{sec:singleplayersem}

To give a path semantics over our trees, we first need to fix a transition system to model which actions/sequences of actions can be executed by the attacker in the system. A transition system is composed of a finite set of states together with a transition relation between pairs of states. We decided not to label every transition with an action as we only consider perfect information here. We also provide a valuation function to our transition system, informing which propositions of $\Prop$ holds in a state of the transition system.

\begin{defn}
\label{def:trans}
A \emph{transition system} over $\Prop$ is a triplet $\transall$ where:
\begin{itemize}
\item $\transS$ is a finite set of states,
\item $\transf \subseteq \transS\times \transS$ is a relation of transitions,
\item $\transv:S \to \mathcal{P}(\Prop)$ is a valuation function. 
\end{itemize}
\end{defn}

 The size of $\trans$, noted $\sizetrans{\trans}$, is defined by its number of states.

We can see from  \Cref{def:trans} that a transition system is a notion close to a game arena. Indeed, it is easy to associate a transition system with a game arena $\arenaa$, by merging the two players into a single one in the following way: $\trans_\arena =(\Pos, \{(\pos, \pos') \in \Pos \times \Pos |$ there exists $\act \in \Act$ such that $\arenaf(\pos, \act)=\pos'\},\arenav)$. Later, we denote $\trans_\arena$ by $\arena$ when it is clear from the context.

For the rest of this section, we fix a transition system $\transall$  over $\Prop$. A \emph{path} in a transition system is a finite non-empty sequence of states $\path=\transs_0\transs_1...\transs_n$ such that, for each $0\leq i <n$,  $(\transs_i, \transs_{i+1}) \in \transf$. The size of a path is its number of states. We denote the set of all paths in $\trans$ by $\Path_\trans$. 

In order to define the path semantics we need to introduce operators over paths.

\begin{defn}
Let $\path=\transs_0\transs_1...\transs_n$ and $\path'=\transs'_0\transs'_1...\transs'_m$ be two paths in $\trans$ with $n,m\geq 0$. The \emph{synchronised concatenation} of $\path_1$ and $\path_2$ is defined only if $\transs_n=\transs'_0$ and is given by t $\path \concatsync \path'= \transs_0\transs_1...\transs_n\transs'_1...\transs'_m$.
\end{defn}

We lift this operations to sets of paths the following way: if $\Path_1$ and $\Path_2$ are two sets of paths, then $\Path_1 \concatsync \Path_2=\{\path_1 \concatsync \path_2 | \path_1 \in \Path_1$ and $\path_2 \in \Path_2\}$.

The authors of  \cite{audinot2017my} introduce the operator of \emph{parallel composition of paths}. However, our definition of attack trees grants us the possibility to use a simpler operator. 

\begin{defn}
Let $\Path_1$, $\Path_2$ be two sets of paths of $\trans$. The \emph{merge} of $\Path_1$ and $\Path_2$ is the set of paths $\Path_1 \andword \Path_2 =\{\path_1 \in \Path_1 |$ there exists $ \path_2 \in \Path_2 $ such that $\path_2\prefix \path_1\} \cup \{\path_2 \in \Path_2 |$ there exists $ \path_1 \in \Path_1 $ such that $\path_1\prefix \path_2\}$
\end{defn}

Unlike the parallel composition of \cite{audinot2017my}, thanks to the transitivity of the prefix relation, the merge operator is associative. 

We can now define our path semantics.

\begin{defn}
\label{Def:semwords}
Let $\att$ be a attack tree over $\Prop$. The path semantics of $\att$ over $\trans$ is the set of paths $\semword{\att}{\trans}$ inductively defined as follow:
\begin{itemize}
\item $\semword{\formula}{\trans} = \{\transs_0\transs_1...\transs_n \in \Path_\trans |  \transs_n \models \formula\}$
\item $\semword{OR(\att_1, ..., \att_n)}{\trans}=\semword{\att_1}{\trans} \cup ... \cup \semword{\att_n}{\trans}$
\item $\semword{SAND(\att_1, ..., \att_n)}{\trans}=\semword{\att_1}{\trans} \concatsync ... \concatsync \semword{\att_n}{\trans}$
\item $\semword{AND(\att_1, ..., \att_n)}{\trans}=\semword{\att_1}{\trans} \andword ... \andword \semword{\att_n}{\trans}$
\end{itemize}
\end{defn}

It is easy to verify that the semantics of Definition \ref{Def:semwords} is equivalent to the one introduced in \cite{audinot2017my} if we restrict to the attack trees whose leaves are of the form $< true,\formula>$. 

Remark that, in our framework, for $\formula_1$ and $\formula_2$ two formulas over $\Prop$, the interpretation of $SAND(\formula_1,\formula_2)$ is that $\formula_1$ must hold at some point and $\formula_2$ must hold at some point afterwards. This requirement does not prevent $\formula_2$ from holding before $\formula_1$.

We also want to point out that our semantics consider that the simultaneity of objectives is always successful: for $\formula_1$ and $\formula_2$ two formulas over $\Prop$, if $\formula_1$, $\formula_2 \in \transv(\transs)$, then $\transs \in \semword{SAND(\formula_1, \formula_2)}{\trans}$ and $\transs \in \semword{AND(\formula_1, \formula_2)}{\trans}$.

\begin{ex}
If we consider the game arena $\arena$ given in \Cref{ga}, we have that $(d_1,d_2) \models \propdone \land \lnot \propseen$, thus the path $(o,m)(d_1,d_2) \in \semword{\propdone \land \lnot \propseen}{\arena}$. This gives us also $(o,m)(d_1,d_2) \in \semword{OR(\propdone \land \lnot \propseen, \propdtwo \land \lnot \propseen )}{\arena}$.
\end{ex}

\subsection{Strategy semantics for attack trees}
\label{sec:multiplayersem}

We start this section by formally defining \emph{strategic trees} as well as some handful operators over them. We use a definition of a tree really close to the one made from prefix-closed languages (for example in \cite[p.~15]{comon2008tree}) except that we fix a letter to represent the root.

\begin{defn}
A \emph{strategic tree} (written s-tree for short) over an alphabet $\alphabet$ is a language $\stree$ of the form $\letter \lang$ with $\letter \in \Sigma$ and $\lang$ is a prefix-closed language over $\alphabet$. 
\end{defn}

For an s-tree $\stree=\letter\lang$, $\letter$ is called the \emph{root}. For a word $\word \in \stree$, if there exists no $\word' \in \stree$ such that $\word\prefix\word'$ then we call $\word$ a \emph{leaf}. The set of all leaves of $\stree$ is denoted by $\leaves(\stree)$. For two words $\word, \word' \in \stree$ such that $\word\prefix\word'$, if there exist no $\word'' \in \stree$ such that $\word\prefix\word''\prefix\word'$, then we says that $\word$ is the \emph{parent}  of $\word'$ and $\word'$ is a \emph{child} of $\word$. The set of all children of a word $\word$ in a s-tree $\stree$ is denoted by $\children_\stree(\word)$. The depth of an s-tree is the size of the longest word in it.

\begin{ex}
  \label{Ex:stree}

Figure \ref{Fig:stratstree} shows an s-tree over the alphabet $\Pos$, the set of positions of the game arena of \Cref{ga}.

\end{ex}
As in \Cref{Ex:stree}, for the particular case where alphabet $\alphabet$ is the set of positions on some game arena, we develop several notions on s-trees and show that strategies can be presented as s-trees. 

For the rest of this section we fix a game arena $\arenaa$.

The next lemma asserts that all histories consistent with a strategy and starting from a given position form an s-tree.

\begin{lem}
\label{lem:pref}
Let $\strat$ be a strategy for some player and $\pos \in \Pos$ be a game position. The language $\stree^{\strat}_\pos=Prefixes(\out(\strat)) \cap \Hist(\arena, \pos)$ is an s-tree over alphabet $\Pos$, and is called the \emph{s-tree associated with $\strat$ from position} $\pos$.
\end{lem}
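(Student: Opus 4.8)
The plan is to show that $\stree^{\strat}_\pos = \Prefixes(\out(\strat)) \cap \Hist(\arena,\pos)$ has the required form: it is a nonempty prefix-closed language whose shortest word is the single-letter word $\pos$. By the definition of an s-tree, it then suffices to write $\stree^{\strat}_\pos = \pos\lang$ where $\lang = \{\word \mid \pos\word \in \stree^{\strat}_\pos\}$, and to check that $\lang$ is prefix-closed.

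\medskip

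First I would establish nonemptiness and the "single root" property. By our standing assumption, from every position there is at least one successor, so there exists at least one play $\play \in \Plays(\arena,\pos)$ consistent with $\strat$ (build it step by step, at each step following the action prescribed by $\strat$ for Player $\player$ and an arbitrary action for the other player; such a successor always exists). Hence $\play \in \out(\strat)$ and its length-one prefix $\pos$ lies in $\Prefixes(\out(\strat)) \cap \Hist(\arena,\pos)$, so $\stree^{\strat}_\pos \neq \emptyset$. Moreover, every word $\hist \in \Hist(\arena,\pos)$ satisfies $\hist_0 = \pos$ by definition of $\Hist(\arena,\pos)$, so every word of $\stree^{\strat}_\pos$ starts with the letter $\pos$; in particular $\pos$ itself is the unique word of length $1$ in the set and is a prefix of every other word in it.

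\medskip

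Next I would verify prefix-closedness of $\stree^{\strat}_\pos$, from which prefix-closedness of $\lang$ follows immediately. Take $\hist \in \stree^{\strat}_\pos$ and $\hist' \prefix \hist$ with $\hist' \neq \eword$ (the empty word is excluded since s-trees over $\Pos$, being of the form $\pos\lang$, need not contain $\eword$; note $\lang$ as a prefix-closed language does contain $\eword$, matching $\hist' = \pos$). Since $\hist \in \Prefixes(\out(\strat))$, there is a play $\play \in \out(\strat)$ with $\hist \prefix \play$; by transitivity of $\prefix$ we get $\hist' \prefix \play$, so $\hist' \in \Prefixes(\out(\strat))$. And since $\hist \in \Hist(\arena,\pos)$ we have $\hist_0 = \pos$, hence $\hist'_0 = \pos$, and $\hist'$ is a nonempty prefix of a history, so $\hist' \in \Hist(\arena,\pos)$. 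Therefore $\hist' \in \stree^{\strat}_\pos$. Translating back: if $\pos\word \in \stree^{\strat}_\pos$ and $\word' \prefix \word$, then $\pos\word' \prefix \pos\word$, so $\pos\word' \in \stree^{\strat}_\pos$, i.e. $\word' \in \lang$; thus $\lang$ is prefix-closed.

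\medskip

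Combining the three points, $\stree^{\strat}_\pos = \pos\lang$ with $\lang$ prefix-closed, which is exactly the definition of an s-tree over $\Pos$. I do not expect any genuine obstacle here; the only point requiring a little care is the bookkeeping between the "word of the form $\pos\lang$" presentation of s-trees and the "nonempty prefix-closed set of histories all starting at $\pos$" description — in particular making sure the empty word is handled consistently (it belongs to the prefix-closed language $\lang$ but the s-tree $\pos\lang$ contains $\pos$ rather than $\eword$). The nonemptiness step genuinely uses the standing assumption that $\nextpos{\pos} \neq \emptyset$ for all $\pos$, and that each player may play any of its actions at any position, which is what guarantees a consistent play actually exists.
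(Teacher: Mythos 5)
Your proof is correct and matches the paper's intent: the paper dismisses this lemma as "straightforward from the definition," and your argument is exactly the routine verification it has in mind (nonemptiness via totality of the transition function, all words sharing the root $\pos$, and prefix-closedness of the residual language $\lang$). The care you take with the empty word and with the $\pos\lang$ presentation is appropriate and introduces no gap.
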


The proof is straightforward from the definition of $\stree^{\strat}_\pos$.

By \Cref{lem:pref}, each branch of $\stree^{\strat}_\pos$ is the succession of all prefixes (in terms of words) of a play consistent with $\strat$. Reciprocally, each play consistent with $\strat$ and starting from position $\pos$ is represented by a branch of $\stree^{\strat}_\pos$. Therefore $\stree^{\strat}_\pos$ fully describes the strategy $\strat$ starting from position $\pos$.

\begin{ex}
\label{Ex:stratstree}
Consider the game arena $\arena$ given in Figure \ref{ga} and the strategy $\strat$ for the thief consisting in waiting one unit of time, then, if the guard is at some door, going to the other door and if the guard is currently in motion, waiting another unit of time before going to the door where the guard will not be. If we call this strategy $\strat$, the strategic tree $\stree^\strat_{(o,d_1)}$ is given in Figure \ref{Fig:stratstree}.

\begin{figure}
  \begin{center}
        \scalebox{0.9}{
\begin{forest}
[$(o ; d_1)$[$(o;d_1)(o;m)$[$(o;d_1)(o;m)(o;d_1)$[$...(d_2;d_1)$][$...(d_2;m)$]][$(o;d_1)(o;m)(o;d_2)$[$...(d_1;d_2)$][$...(d_1;m)$]]][$(o;d_1)(o;d_1)$[$(o;d_1)(o;d_1)(d_2;d_1)$][$(o;d_1)(o;d_1)(d_2;m)$]]]
\end{forest}
}
\end{center}
\caption{strategic tree $\stree^{\strat}_{(o,d_1)}$}
\label{Fig:stratstree}
\end{figure}
\end{ex}

As we put the focus on attack trees, we take the convention that, in the game arena, Player $1$ is called \emph{Attacker} and Player $2$ is called \emph{Defender}. In this setting, Attacker tries to achieve an attack that is described by some attack tree $\att$, while Defender tries to prevent it from happening. In other words, the winning plays for the Attacker are given as $\win_\attacker=\semword{\att}{\arena}$. Our strategy semantics consists of the set of  winning strategies for this game. 

We start by motivating a construction only for a leaf attack tree. The strategy semantics for an attack tree $\phi$ is the set of all strategies that are winning for the reachability game defined by $\phi$.
Remark that for the case of reachability games, once a winning position is reached, the continuation of the play does not matter. Therefore, for reachability games, the s-tree corresponding to a winning strategy can be cut as a finite tree: this cut consists in removing all children of a node describing a history ending in a position where $\formula$ holds. This way of cutting motivates the definition of prefix of s-trees as follows:

\begin{defn}
\label{def:prefix}
Let $\stree$ be an s-tree over $\alphabet^*$. An s-tree $\stree'$ is a \emph{prefix} of $\stree$ if $root(\stree')=root(\stree)$, and $\stree'\subseteq \stree$, and 
for every $\word \in \stree'\setminus \leaves(\stree')$, we have $\children_{\stree'}(\word)=\children_{\stree}(\word)$.
\end{defn}

\begin{ex}
\label{Ex:prefix}
For the s-tree $\stree^\strat_{(o,d_1)}$ of Figure \ref{Fig:stratstree} and the two trees given in Figure $\ref{Fig:prefix}$, we have $\stree_a$ is a prefix of $\stree^\strat_{(o,d_1)}$, but $\stree_b$ is not because $\children_{\stree^\strat_{(o,d_1)}}(o,d_1)=\{(o,d_1)(o,m), (o,d_1)(o,d_1)\}\neq \children_{\stree_b}(o,d_1)\break =\{(o,d_1)(o,d_1)\}$.

\begin{figure}
\centering 
\subfloat{
        \scalebox{0.8}{
\begin{forest}
[$(o ; d_1)$[$(o;d_1)(o;m)$][$(o;d_1)(o;d_1)$[$(o;d_1)(o;d_1)(d_2;d_1)$][$(o;d_1)(o;d_1)(d_2;m)$]]]
\end{forest}}
}
\subfloat{\scalebox{0.8}{
\begin{forest}
[$(o ; d_1)$[$(o;d_1)(o;d_1)$[$(o;d_1)(o;d_1)(d_2;d_1)$][$(o;d_1)(o;d_1)(d_2;m)$]]]
\end{forest}}
}
\caption{two s-trees: $\stree_a$ (left) and $\stree_b$ (right)}
\label{Fig:prefix}
\end{figure}
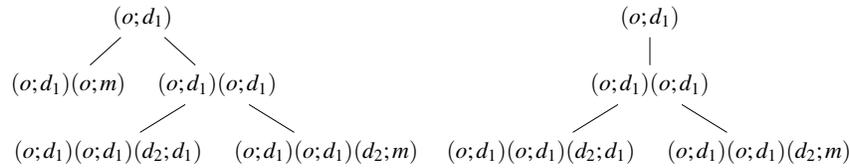
\end{ex}

With this notion of prefix, it is immediate to characterise attack trees that witness a strategy.

\begin{defn}
\label{def:witness}
Consider a leaf attack tree $\formula$, and write $\formula \subseteq Pos$ for the set of positions where $\formula$ holds. Consider $\strat$ a strategy for Attacker in the reachability game $(\arena, \phi)$ and $\stree^{\strat}_\pos$ the associated s-tree  from position $\pos$. A finite s-tree $\cst$ is a \emph{witness} of $\strat$ from position $\pos$ if $\cst$ is a finite prefix of $\stree^{\strat}_\pos$, and $\leaves(\stree) \subseteq \Pos^*\formula$.
\end{defn}

\Cref{def:witness} can be generalised to an arbitrary reachability condition $\reachwin \subseteq \Pos$ as follows: $\stree$ is a witness of $\strat$ from position $\pos$ if $\cst$ is a finite prefix of $\stree^{\strat}_\pos$ and $\hist \in \leaves(\stree)$ implies $last(h)\in \reachwin$.

\begin{ex}
\label{Ex:cst}
In the game arena of \Cref{ga}, if we consider the reachability condition $\reachwin=\{(o,m),\break (d_2,m), (d_2,d_1)\}$, then the attack tree $\stree_a$ of \Cref{Fig:prefix} is a witness for the s-tree $\stree^\strat_{(o,d_1)}$ drawn in \Cref{Fig:stratstree}.
\end{ex}

\Cref{def:witness} leads us to the following intuitive lemma.

\begin{lem}
\label{lem:koning}
Let $\strat$ be a strategy for Attacker and $\reachwin$ be a winning condition. Then $\strat$ is a winning strategy for $(\arena, \reachwin)$ from position $\pos \in \Pos$ if, and only if, there exists a witness $\cst$ of $\strat$ from $\pos$.
\end{lem}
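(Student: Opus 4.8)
The plan is to prove both directions of the equivalence, using König's lemma for the hard direction (the name of \Cref{lem:koning} is a hint that this is the intended route).

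For the easy direction, suppose there exists a witness $\cst$ of $\strat$ from $\pos$. I would take an arbitrary play $\play \in \out(\strat) \cap \Plays(\arena,\pos)$ and show it reaches $\reachwin$. Since $\cst$ is a finite prefix of $\stree^{\strat}_\pos$, and every finite prefix of $\play$ lies in $\stree^{\strat}_\pos$ (by \Cref{lem:pref}), I would argue that following $\play$ step by step inside $\cst$ — using the defining property of a prefix of an s-tree, namely that at any non-leaf node all children are retained — the play must eventually hit a leaf of $\cst$, because $\cst$ is finite while $\play$ is infinite and shares the same root. That leaf is a history $\hist$ with $last(\hist) \in \reachwin$ by \Cref{def:witness}, so $\play$ visits a winning position. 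Hence $\strat$ is winning from $\pos$.

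For the converse, suppose $\strat$ is winning for $(\arena,\reachwin)$ from $\pos$. I would build $\cst$ by pruning $\stree^{\strat}_\pos$: keep every history all of whose proper prefixes avoid $\reachwin$ (equivalently, cut the tree just below the first position in $\reachwin$ along each branch). Formally, let $\cst$ be the set of histories $\hist \in \stree^{\strat}_\pos$ such that no proper prefix of $\hist$ ends in $\reachwin$. One checks $\cst$ is again an s-tree (prefix-closedness is immediate), that $\cst$ is a prefix of $\stree^{\strat}_\pos$ in the sense of \Cref{def:prefix} (at a retained non-leaf node, i.e. one not ending in $\reachwin$, all children are retained), and that every leaf of $\cst$ ends in $\reachwin$: a leaf either ends in $\reachwin$, or it is a leaf of $\stree^{\strat}_\pos$ itself — but the latter is impossible since every history extends to a play consistent with $\strat$, so $\stree^{\strat}_\pos$ has no finite branch. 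It remains to show $\cst$ is \emph{finite}. This is where König's lemma enters: $\cst$ is finitely branching (since $\nextpos{\pos}$ is finite for every $\pos$, as the arena is finite), so if $\cst$ were infinite it would contain an infinite branch, which corresponds to a play $\play \in \out(\strat) \cap \Plays(\arena,\pos)$ none of whose positions lies in $\reachwin$ (by construction of $\cst$), contradicting that $\strat$ is winning. Hence $\cst$ is finite, and it is the desired witness.

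The main obstacle I anticipate is the finiteness argument in the converse: one must be careful that cutting "below the first visit to $\reachwin$" really yields a tree with no infinite branch, i.e. that an infinite branch of $\cst$ genuinely gives a $\strat$-consistent play avoiding $\reachwin$ forever — this uses that consistency of a play with $\strat$ is determined by its finite prefixes (as stated in the preliminaries) and that, by the construction, every node of $\cst$ strictly below the cut avoids $\reachwin$. Everything else (closure properties, the prefix condition of \Cref{def:prefix}, absence of finite branches in $\stree^{\strat}_\pos$) is routine bookkeeping from the definitions recalled in \Cref{sec:preliminaries} and \Cref{sec:multiplayersem}.
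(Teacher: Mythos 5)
Your proof is correct and follows exactly the route the paper intends: the paper gives no detailed argument for \Cref{lem:koning}, stating only that it ``relies on K\"onig's Lemma,'' and your two directions (the finite witness forcing every consistent play into a winning leaf, and the pruning of $\stree^{\strat}_\pos$ at first visits to $\reachwin$ combined with K\"onig's lemma for finiteness) are the standard instantiation of that hint. The subtle points you flag --- that $\stree^{\strat}_\pos$ has no finite branches and that an infinite branch of the pruned tree yields a $\strat$-consistent play avoiding $\reachwin$ --- are indeed the ones that need care, and you handle them correctly.
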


The proof relies on the K\"{o}nig's Lemma.

Thus, for a leaf $\formula$, the strategy semantics is all witnesses that can be constructed from a winning strategy over the reachability game defined by $\formula$. Moreover, an s-tree is in the semantics of a leaf attack tree if it is a prefix of some strategy and if all its leaves are in the path semantics of the attack tree. The former condition guarantees that our s-tree has the shape of a strategy, while the latter guarantees that the strategy is winning. As we will see below, those are the two conditions we use to define the strategy semantics of arbitrary attack trees.


For the first condition, we say that an s-tree $\stree$ is \emph{well-formed} if there exists a strategy $\strat$ and a position $\pos$ such that $\stree$ is a prefix of $\stree^\strat_\pos$. For the second condition, we use the following definition:

\begin{defn}
Let $\att$ be an attack tree. A \emph{$\att$-s-tree} is a finite s-tree $\stree$ over $\Pos$ such that $\leaves(\stree)\subseteq \semword{\att}{\arena}$.
\end{defn}

Since for a leaf attack tree $\formula$, we have $\semword{\formula}{\arena}=\Pos^*\formula $, a witness $\stree$ (\Cref{def:witness}) is a $\formula$-s-tree.
We now have all the material to define the strategy semantics of an attack tree.

\begin{defn}
Let $\att$ be an attack tree. The \emph{strategy semantics} associated with $\att$, written $\sem{\att}{\arena}$ is the set of all well-formed $\att$-s-trees.
\end{defn}

In particular, $\sem{\formula}{\arena}$ is the set of all witnesses in the reachability game $(\arena,\formula)$.

We can see that the idea is far from the one of attack-defence trees in \cite{kordy2010foundations}. In attack-defence trees, the countermeasure is a structure similar to an attack tree whose semantics describes paths that prevent an attack from succeeding, and by no means a strategy of the attacker's opponent in the arena.

Now that we defined our semantics, we might want to know if it can be obtained in a compositional manner
? Namely, if the semantics of a compound tree can be defined in terms of the semantics of its subtrees: More formally.. can we define $\sem{OP(\att_1, ..., \att_n)}{\arena}$ on the basis of $\sem{\att_1}{\arena}, ...,\break \sem{\att_n}{\arena}$? Sadly, the answer is no: 

\begin{ex}
\label{Ex:OR}
Consider the game arena defined in Figure \ref{ga}. Obviously, our attacker here will be the thief while the guard will do the defender role. We also consider a new proposition: $Start$ which only holds at position $\{o,m\}$. We have that the semantics of $SAND(start, \propdone\land \lnot \propseen)$ is empty. Indeed, the guard can choose to only keep door $1$ and thus, the thief will not be able to attain $\propdone$ while remaining unseen. Similarly, $\sem{SAND(start, \propdtwo\land \lnot \propseen)}{\arena}$ is empty. However, the strategy consisting on waiting one unit of time then going through the door not controlled by the guard is a winning strategy, it is easy to construct a witness for that strategy that attains the objective of $OR(SAND(start, \propdone\land \lnot \propseen),SAND(start,\propdtwo\land \lnot \propseen))$ and thus is in its strategy semantics.
\end{ex}

The previous example showcases an empty semantics for $\att_1$ and $\att_2$ but a non-empty one for $OR(\break \att_1, \att_2)$. This is because, for $\formula_1$ and $\formula_2$ two propositional formulas over $\Prop$, there are more strategies to achieve $\formula_1\lor\formula_2$ than strategies only achieving $\formula_1$ or only achieving $\formula_2$. We can for example consider a strategy that, depending on the move of the opponent, chooses whether it prefers to attain $\formula_1$ or to attain $\formula_2$.

Remark that, using the "merge" operator of \cite{paul2015automata} provides us a compositional semantics for attack trees with $SAND$-only operators. However, we have already argues that the $OR$ operator have some problems just as the $AND$ operator for more elaborate examples. Still, it is possible to tune the semantics so that it becomes compositional for the AND operator, at the price of loosing clarity, but more regrettably without solving the hopeless case of the OR operator.


\section{Decision Problems over attack trees}
\label{sec:decisionproblems}

In this section, we discuss two common decision problems over semantics of attack trees and determine their complexities with respect to the path semantics and the strategy semantics. The first problem we consider is the Non-Emptiness problem. This problem consists of, given an attack tree and a game arena, deciding whether its semantics is not empty:

\begin{defn}
The \emph{Non-Emptiness} problem is the following decision problem for a fixed semantics $\semgeneral{\cdot}{\arena}$ of attack trees:
\newline \textbf{Input:} $\arena$, a game arena, $\att$, an attack tree.
\newline \textbf{Output:} $Yes$ if $\semgeneral{\att}{\arena}\neq \emptyset$, $No$ otherwise.
\end{defn}

The Non-Emptiness problem for the path semantics is denoted
by PNE while the Non-Emptiness problem for the strategy semantics
is denoted SNE. A positive instance of PNE tells us that Attacker has a
favourable scenario to attack. A positive instance of SNE tells us that Attacker has a strategy (it
is possible for him to attack successfully the system independently of
the defender/environment comportment).

We now turn to the Membership problem.

\begin{defn}
The \emph{Membership problem} is the following decision problem for a fixed attack tree semantics $\semgeneral{\cdot}{\arena}$ of  of type $X$: 
\newline \textbf{Input:} $\arena$, a game arena, $\att$, an attack tree and $x \in X$.
\newline \textbf{Output:} Yes if $x \in \semgeneral{\att}{\arena}$, No otherwise.
\end{defn}

The Membership problem for the path semantics is denoted by PM while the Membership problem for the strategy semantics is denoted SM. PM consists of determining whether a path is an attack or not. It can be really useful if we have an attack tree describing an attack goal over an information system and a log file of that system. Determining if the system has been attacked is equivalent to determining whether the path described by the log file is in the path semantics of the attack tree or not. The idea behind SM is different: it is useful to determine whether a strategy is winning or not for a given attack objective. We start to analyse the complexity of PM and take advantage of it for the proofs of the other results. We then consider SNE. After that, PNE is easily determined as a particular case of SNE and we finish by SM whose proof uses similar and simpler constructions than the one for SNE.


If we use attack trees with preconditions, the problem PM is \NP-hard; this comes from the fact that the \emph{packed interval covering problem}, which can be easily captured by the parallel composition (see \cite{pinchinat2020library}), is \NP-complete (see \cite{saffidine2019packed}). However, PM becomes simpler if we discard preconditions:

\begin{thm}
 \label{Thm:polytimemem}
PM is in \PTIME.
\end{thm}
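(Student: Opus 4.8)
The plan is to give a recursive procedure that, on input a path $\path = \transs_0 \transs_1 \dots \transs_n$, a transition system $\trans$ (coming from a game arena $\arena$), and an attack tree $\att$, decides whether $\path \in \semword{\att}{\trans}$, and to argue that it runs in polynomial time. The key observation is that membership in the path semantics can be decided by structural induction on $\att$, but naively this would require computing whole sets of paths. Instead I would compute, for each subtree $\att'$ of $\att$, a small amount of bookkeeping about which \emph{prefixes} of $\path$ it can ``accept'', where acceptance is relativized to the suffix structure of $\path$. Concretely, for a leaf $\formula$, I would first check $\transs_n \models \formula$ (propositional evaluation, polynomial in the size of $\formula$ and the valuation $\transv$); note that $\semword{\formula}{\trans}$ contains $\path$ iff this holds, and moreover for the $SAND$ case we will need to know, for each index $i$, whether $\transs_i \models \formula$, which is again a polynomial number of polynomial checks.

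The main work is handling the operators. For $OR$, $\path \in \semword{OR(\att_1,\dots,\att_n)}{\trans}$ iff $\path \in \semword{\att_j}{\trans}$ for some $j$, so recurse on each child. For $SAND$, by \Cref{Def:semwords}, $\path \in \semword{\att_1}{\trans} \concatsync \dots \concatsync \semword{\att_n}{\trans}$ iff there exist indices $0 = j_0 \le j_1 \le \dots \le j_n = n$ such that the factor $\transs_{j_{k-1}} \dots \transs_{j_k}$ lies in $\semword{\att_k}{\trans}$ for each $k$; here I would use dynamic programming over pairs (index $i$, child $k$), filling a table entry $T[i][k]$ = ``is $\transs_{?}\dots\transs_i$ decomposable as the synchronised concatenation of the first $k$ children's semantics'', which needs, as a subroutine, to decide membership of contiguous factors $\transs_a\dots\transs_b$ of $\path$ in $\semword{\att_k}{\trans}$ — this is the same membership problem on a sub-path, handled recursively. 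Since $\path$ has $O(n^2)$ contiguous factors and $\att$ has polynomially many nodes, and each recursive call is on a strictly smaller subtree, the total number of (sub-path, subtree) pairs queried is polynomial. For $AND$, by the definition of $\andword$, $\path \in \semword{\att_1}{\trans} \andword \dots \andword \semword{\att_n}{\trans}$ iff $\path$ itself lies in each $\semword{\att_i}{\trans}$ \emph{or} (unrolling the merge, which by associativity is a union of conditions of the form ``$\path \in \semword{\att_i}{\trans}$ and for every $j$ some prefix of $\path$ is in $\semword{\att_j}{\trans}$'') — more simply: $\path \in \semword{AND(\att_1,\dots,\att_n)}{\trans}$ iff $\path \in \semword{\att_i}{\trans}$ for some $i$ and for every $j \ne i$ there is a prefix $\transs_0\dots\transs_{k}$ of $\path$ with $\transs_0\dots\transs_k \in \semword{\att_j}{\trans}$; checking existence of such a prefix is again $n$ recursive membership queries on prefixes of $\path$.

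So the algorithm builds a table indexed by (contiguous factor of $\path$, node of $\att$) — of which there are $O(n^2 \cdot \sizeatt{\att})$ — and fills each entry in polynomial time using already-computed smaller entries (smaller subtree, or same subtree on $SAND$-decomposition subproblems, which are themselves over smaller subtrees). The propositional satisfaction checks $\transs_i \models \formula$ at the leaves are each polynomial. Hence the whole procedure is polynomial, giving PM $\in \PTIME$.

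The step I expect to be the main obstacle is getting the $SAND$ case exactly right: one must be careful that the synchronised concatenation $\concatsync$ forces the shared endpoint state (so the split points $j_k$ must be chosen as \emph{states} of $\path$, and the factor $\transs_{j_{k-1}}\dots\transs_{j_k}$ is itself a path in $\trans$, which it is since $\path$ is), and that a single state $\transs_i$ can serve simultaneously as the end of one factor and the start of the next, including the degenerate one-state factors that the remark after \Cref{Def:semwords} explicitly allows ($\transs \in \semword{SAND(\formula_1,\formula_2)}{\trans}$ when both hold at $\transs$). Making the dynamic program robust to these degenerate cases, and confirming the recursion on subtrees is well-founded so the polynomial bound goes through, is the delicate part; once that bookkeeping is pinned down the complexity analysis is routine.
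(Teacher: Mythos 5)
Your algorithm is correct and reaches the same conclusion as the paper's, but it is organized around a different key observation. The paper's argument hinges on the closure property that if a path lies in $\semword{\att}{\trans}$ then so does any extension of it by an arbitrary prefix in front; this monotonicity is what licenses its \emph{backward induction} over the input path, so that for each subtree one essentially only needs to track, per end position, the threshold index from which membership of the factor holds, giving a table linear in the length of $\path$ per node of $\att$. You never invoke that property: instead you memoize membership for every contiguous factor of $\path$ and every subtree of $\att$, which costs a quadratic rather than linear factor in the path length but remains comfortably polynomial. In exchange, you make explicit the recurrences the paper leaves implicit --- the split-point characterization of $\concatsync$ for $SAND$ (with the degenerate one-state factors correctly flagged), and the characterization of the $n$-ary merge as ``$\path\in\semword{\att_i}{\trans}$ for some $i$ and every $\att_j$ is met on some prefix of $\path$,'' which is indeed equivalent to the iterated binary $\andword$ because any two prefixes of a fixed word are comparable, so the associativity unrolling goes through. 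In short, your route is marginally less efficient but more elementary and fully self-contained; the paper's buys a smaller table at the price of first establishing the prefix-extension lemma.
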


For a polynomial algorithm, we use the fact that a word is in
the semantics of an attack tree, then adding an arbitrary prefix to it
keeps it in the semantics. As a consequence, we do not need to recompute
which sub-goals of the attack tree are satisfied whenever we add a position in front of a path. Thus, the shape of the problem is well-suited for a backward
induction over the input path. Moreover, determining if a given input
path satisfies an attack tree knowing whether it satisfies the sub-trees can be done in linear time over the size of the
attack tree.


We now turn to the complexity of SNE.

\begin{thm}
\label{thm:sne}
SNE is \PSPACE-complete.
\end{thm}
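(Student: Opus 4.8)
\textbf{Proof strategy for \Cref{thm:sne}.}

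The plan is to establish membership in \PSPACE{} and \PSPACE-hardness separately.

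For the upper bound, I would first argue that if $\sem{\att}{\arena}$ is non-empty, then it contains a witness s-tree of bounded branching and bounded depth. Branching is bounded by $|\Pos|$ (a node with history ending in position $\pos$ has at most $|\nextpos{\pos}| \leq |\Pos|$ children in any well-formed s-tree). For depth, the key observation is that along any branch, the relevant ``state'' of the computation is captured by the current position of the arena together with, for each subtree of $\att$, a bit recording whether that subtree is already satisfied by the current prefix (this is exactly the monotone bookkeeping exploited in the proof of \Cref{Thm:polytimemem}: once a subtree is satisfied, adding positions keeps it satisfied). Hence there is a finite ``product'' structure of size $|\Pos| \cdot 2^{O(\sizeatt{\att})}$, and we never need branches longer than its size; a minimal witness has depth at most exponential. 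This s-tree is too large to write down, but membership can be decided by an alternating polynomial-time procedure: existentially guess the root position and, at each node, existentially guess whether to declare the current node a leaf (checking that the accumulated bit-vector marks the root of $\att$ as satisfied, which amounts to a \PTIME{} check as in \Cref{Thm:polytimemem}) or else existentially guess Attacker's action and universally branch over Defender's action, updating the position and the bit-vector. Well-formedness is automatic because we only ever move along legal transitions consistent with a single Attacker action at each node. The depth counter needs only polynomially many bits, so the whole thing runs in \APTIME{} $=$ \PSPACE.

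For the lower bound, I would reduce from a canonical \PSPACE-complete problem such as the word problem for polynomial-space Turing machines, or more conveniently from \emph{QBF} or from the (concurrent, or even turn-based) reachability/safety game problems that are already known \PSPACE-complete when the winning condition is presented succinctly. The cleanest route is probably to encode \emph{QBF}: given $Q_1 x_1 \cdots Q_n x_n\, \psi$, build a game arena in which Attacker controls the existentially quantified variables and Defender the universally quantified ones, positions record the truth assignment being built, and use a $SAND$ of $n$ leaf formulas (one per quantifier level, forcing the variables to be set in order) whose final leaf is the propositional formula $\psi$ evaluated on the recorded assignment. Then $\sem{\att}{\arena}$ is non-empty iff Attacker has a strategy to drive the play through the quantifier levels and reach an assignment satisfying $\psi$ against all Defender choices, i.e.\ iff the QBF is true. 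Care is needed so that the $SAND$ structure genuinely enforces the left-to-right ordering of the leaves and that simultaneity does not let Attacker cheat (the remark after \Cref{Def:semwords} about simultaneous satisfaction must be accounted for in the gadget).

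The main obstacle I expect is the depth bound in the upper-bound argument: one must be careful that ``satisfied-subtree'' bits really are monotone under extension for all three operators $OR$, $AND$, $SAND$ --- this is true here precisely because leaves are precondition-free reachability goals and $\semword{\cdot}{\trans}$ is prefix-insensitive at the front, but it is exactly the property that fails in richer settings --- and that an alternating machine can maintain both the arena position and these bits in polynomial space while a polynomially-bounded depth counter rules out infinite descent. Getting the hardness gadget to interact correctly with the $SAND$ semantics (especially the ``simultaneous success'' convention) is the second delicate point, but it is a matter of careful padding rather than a conceptual difficulty.
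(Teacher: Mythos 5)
Your overall architecture (alternating polynomial-time algorithm for membership, QBF reduction for hardness) matches the paper's, but the membership argument as written has a genuine gap in the depth bound. You bound the branching and then argue that a minimal witness has depth at most the size of the product structure $|\Pos|\cdot 2^{O(\sizeatt{\att})}$, i.e.\ \emph{exponential}, and conclude that the procedure runs in alternating polynomial time because ``the depth counter needs only polynomially many bits.'' That inference is false: an alternating machine that walks down a branch of exponential length runs for exponentially many steps, and storing the step counter compactly does not make it a polynomial-\emph{time} alternating machine (alternating polynomial \emph{space} gives \textsc{EXPTIME}, not \PSPACE). What you actually need is a \emph{polynomial} bound on the depth of some witness, and the paper supplies exactly this (\Cref{Lem:stratlength}): if $\sem{\att}{\arena}\neq\emptyset$ then some witness has depth at most $|\Pos|\times n$, where $n$ is the number of leaves of $\att$. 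The ingredients are already in your sketch --- you observe that the satisfied-subtree bits are monotone under extension --- but you must exploit that monotonicity: along any branch the bit-vector changes at most $n$ times, and between two consecutive changes Attacker can play a memoryless winning strategy of the induced reachability game, which reaches its target within $|\Pos|$ steps. With that lemma in hand, your alternating procedure (which is essentially the paper's Algorithm~\ref{Algo:emptymulti}) does run in polynomial time and the upper bound goes through.

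On the hardness side your reduction is the same in spirit as the paper's, but the sentence ``positions record the truth assignment being built'' cannot be taken literally, since that yields $2^n$ positions and hence no polynomial-time reduction; and then a single final leaf ``$\psi$ evaluated on the recorded assignment'' has no position at which to be evaluated. The paper's gadget resolves this with $2n+1$ positions that record only the current literal choice, labels position $\pos_i$ (resp.\ $\lnot\pos_i$) with a proposition $p_j$ for every clause $\psi_j$ containing $x_i$ (resp.\ $\lnot x_i$), and pushes the conjunction over clauses into the attack tree as $SAND(Start, AND(p_1,\ldots,p_k))$: the path semantics of $AND$ then demands that every clause-proposition be witnessed somewhere along the play, which is exactly satisfaction of $\psi$ by the chosen assignment. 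This is the missing gadget your sketch gestures at; your worry about the ``simultaneous success'' convention is legitimate but harmless here, since each position carries only the clause-propositions of a single literal.
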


For the membership, we construct an alternating algorithm (see \cite{chandra1976alternation})
solving the problem that can be executed in polynomial time. This
algorithm consists of synthesizing a history over the game arena and
then verifying that this history is an attack (by \Cref{Thm:polytimemem}, this verification is doable in polynomial time). To construct this history, we finitely iterate first to make a non-deterministic existential guess for
the action of Attacker and then a non-deterministic universal
guess for the action of Defender. We then
show that the resulting history is in the path semantics of the input attack tree $\att$  if,
and only if, the strategy semantics of $\att$ is not empty. We guarantee a polynomial time execution, namely that the resulting
history need not be too long with the following lemma.

\begin{lem}
\label{Lem:stratlength}
Let $\arenaa$ be a game arena and $\att$ be an attack tree with $n$ leaves. If $\sem{\att}{\arena} \neq \emptyset$, then there exists $\stree \in \sem{\att}{\arena}$ of depth $d \leq |\Pos| \times n$.
\end{lem}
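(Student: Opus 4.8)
The plan is to show that from any well-formed $\att$-s-tree $\stree$ witnessing $\sem{\att}{\arena}\neq\emptyset$ we can extract a smaller one of bounded depth, by pruning each branch as soon as it has "done its job". First I would recall what membership in $\sem{\att}{\arena}$ requires: $\stree$ is well-formed (a prefix of $\stree^\strat_\pos$ for some strategy $\strat$ and position $\pos$) and every leaf of $\stree$ lies in $\semword{\att}{\arena}$. The key structural fact I would establish about the path semantics is a \emph{stabilization} property: for a leaf $\formula$, membership of a path $\transs_0\cdots\transs_k$ in $\semword{\formula}{\arena}$ depends only on $\transs_k$; and inductively, whether a path belongs to $\semword{\att}{\arena}$ is determined by which of a bounded family of "sub-objectives" have already been met along prefixes of the path. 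Concretely, to each node $v$ of $\att$ I associate a monotone record that tracks, for each of the $n$ leaves, whether the corresponding subformula has been satisfied at some visited position; the path semantics of $\att$ is then characterized by a monotone Boolean condition on this record (union for $OR$, conjunction for $AND$, and for $SAND$ a requirement that the leaves be satisfied in order). This is essentially the same observation used for \Cref{Thm:polytimemem}, and I would state it as a small auxiliary claim.

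The core argument is then a pumping/pruning step. Fix $\stree\in\sem{\att}{\arena}$ of minimal depth and suppose, for contradiction, its depth exceeds $|\Pos|\times n$. Take a branch $\hist=\pos_0\pos_1\cdots\pos_d$ realizing this depth. Along $\hist$, consider the sequence of pairs $(\text{current position }\pos_i,\ \text{record }r_i)$ where $r_i$ encodes which of the $n$ leaf-subformulas have been satisfied on $\pos_0\cdots\pos_i$. Since the record is monotone and has at most $n$ coordinates, it can strictly increase at most $n$ times; between two consecutive increases (and after the last one) the position $\pos_i$ ranges over $\Pos$, so if the branch is longer than $|\Pos|\times n$ some position repeats \emph{with the same record} on a segment where the record is constant. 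On that segment, say $\pos_i=\pos_j=:\pos$ with $i<j$ and $r_i=r_j$, I would splice out the loop: replace the subtree of $\stree$ rooted at the node $\pos_0\cdots\pos_i$ by (a suitably re-rooted copy of) the subtree rooted at $\pos_0\cdots\pos_j$. Because $\pos_i=\pos_j$, the resulting object is still a genuine s-tree over $\Pos$; because $\strat$ can be mimicked after the splice (the remaining play is still a legal continuation from $\pos$, and one takes the strategy "$\strat$ after $\pos_0\cdots\pos_j$" on that part and $\strat$ elsewhere), it is still well-formed; and because the record at the splice point is unchanged ($r_i=r_j$) and the records are what determine membership of leaves in $\semword{\att}{\arena}$, every leaf of the spliced tree still lies in $\semword{\att}{\arena}$. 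Hence the spliced tree is again in $\sem{\att}{\arena}$ but has strictly smaller depth (or at least strictly fewer nodes on that branch), contradicting minimality. Iterating, we conclude the minimal-depth element has depth $\le|\Pos|\times n$.

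The main obstacle I anticipate is the well-formedness bookkeeping after splicing: one must check that cutting out a loop from a prefix of $\stree^\strat_\pos$ again yields a prefix of $\stree^{\strat'}_\pos$ for some strategy $\strat'$, i.e.\ that the branching structure prescribed by \Cref{def:prefix} (at non-leaf nodes, all arena-successors consistent with the strategy are present) is preserved. The clean way is to not pump an abstract s-tree but to first pump the underlying strategy: define $\strat'(\hist')$ by, morally, "redirect histories passing through $\pos$ via the shortcut", and then set $\stree'$ to be the corresponding cut prefix; monotonicity of the record then gives that the cut can still be taken finite with leaves in $\semword{\att}{\arena}$, via \Cref{lem:koning} applied to the reachability conditions induced by $\att$'s leaves. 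A second, minor subtlety is the $SAND$ case, where membership depends not just on \emph{which} leaves were satisfied but in \emph{what order}; I would handle this by refining the record to also store the largest prefix of each $SAND$-child's children-list that has been satisfied in order, which is still monotone and still bounded by $n$, so the counting argument goes through unchanged.
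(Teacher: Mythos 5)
Your proposal is correct in substance but follows a genuinely different route from the paper. The paper dispatches \Cref{Lem:stratlength} in one sentence by appealing to the fact that memoryless strategies suffice in (concurrent) reachability games, citing de~Alfaro et al.: implicitly, one decomposes the objective $\semword{\att}{\arena}$ into at most $n$ successive reachability phases and plays a memoryless winning strategy in each, so every consistent play spends at most $|\Pos|$ steps per phase. This is a \emph{top-down} construction of a short witness. You instead argue \emph{bottom-up}: take an arbitrary element of $\sem{\att}{\arena}$ and pump it down by splicing out a segment of a branch on which both the position and a monotone progress record repeat. The two arguments share their combinatorial core (a record with at most $n$ monotone increases, hence at most $|\Pos|$ fresh positions per constant-record phase), but yours is self-contained --- it does not invoke memoryless determinacy, whose application here would anyway require spelling out a product of $\arena$ with the record automaton --- at the price of the strategy-splicing bookkeeping, which you correctly identify and resolve by modifying the underlying strategy $\strat$ rather than the abstract s-tree. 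Two small points to watch. First, your record alone does not determine membership in $\semword{\att}{\arena}$: the path semantics tests the \emph{last} state of the path (e.g.\ $\semword{\formula}{\arena}$ and the merge operator both constrain the final valuation), so membership is a function of the pair (monotone record, current position); this is harmless because your splice already matches positions, but the auxiliary claim should be stated for the pair. Second, the induction measure should be the number of nodes rather than the depth (a splice deep in one subtree need not decrease the global depth), as your parenthetical already anticipates; with that measure the minimal witness has no repeated (position, record) pair on any branch and the stated bound follows.
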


The basic idea behind to prove \Cref{Lem:stratlength} is that, memoryless strategies suffice in reachability games (see \cite{de2007concurrent}).

We design Algorithm \ref{Algo:emptymulti} to solve SNE whose idea is explained above and show that it belongs to \PSPACE.

\begin{algorithm}
\caption{$SNE(\arena, \att$)}
\label{Algo:emptymulti}
\textbf{Input:} $\arena$ a game arena and $\att$ an attack tree with $n$ leaves \newline
\textbf{Output:} $True$ if $\sem{\att}{\arena}\neq \emptyset$, $False$ otherwise.

\begin{algorithmic}[1]

\STATE $\hist \leftarrow$ empty list
\STATE $\pos \leftarrow$ [$\exists$]guess position in $\Pos$
\STATE $\hist.append(\pos)$
\WHILE {$size(\hist)< |Pos| \times n$}
\STATE [$\exists$]guess break or not
\STATE $\act_1 \leftarrow$ [$\exists$]guess action in $\Act_\attacker$
\STATE $\act_2 \leftarrow$ [$\forall$]guess action in $\Act_\defender$
\STATE $\hist.append(\arenaf(last(\hist),(\act_1, \act_2)))$
\ENDWHILE
\RETURN $\hist \in \semword{\att}{\arena}$
 
\end{algorithmic}
\end{algorithm}

\begin{lem}
Algorithm \ref{Algo:emptymulti} is an alternating polynomial-time algorithm and solves SNE.
\end{lem}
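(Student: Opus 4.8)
The plan is to prove two things about Algorithm~\ref{Algo:emptymulti}: first that it runs in alternating polynomial time, and second that it is correct, i.e.\ that some accepting computation exists if and only if $\sem{\att}{\arena}\neq\emptyset$. Combined with the characterisation \APTIME${}={}$\PSPACE{} of~\cite{chandra1976alternation}, this yields the \PSPACE{} upper bound claimed in \Cref{thm:sne}; the matching lower bound is proved separately.

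For the complexity claim I would argue as follows. The \textbf{while} loop iterates at most $|\Pos|\times n$ times, which is polynomial in the size of the input $(\arena,\att)$. Each iteration does a constant amount of work plus one existential guess over $\Act_\attacker$, one universal guess over $\Act_\defender$, and one existential guess of a single bit (``break or not''); each such guess is over a set of size at most $|\Act|$, so it costs polynomial time on an alternating machine. The history $\hist$ stored in memory has length at most $|\Pos|\times n$, so appending to it and maintaining it costs polynomial time. The only non-obvious step is the final line, the test $\hist\in\semword{\att}{\arena}$: by \Cref{Thm:polytimemem} (PM is in \PTIME), this membership test is computable in deterministic polynomial time in $|\arena|+\sizeatt{\att}+|\hist|$, hence polynomial in the input. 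Therefore the whole algorithm runs in alternating polynomial time.

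For correctness I would spell out what an accepting run of the alternating machine means. An accepting run corresponds to an existential strategy $\sigma$ that chooses the initial position $\pos_0$, and at each round chooses whether to break and, if not, chooses $\act_1\in\Act_\attacker$, in such a way that for \emph{every} sequence of universal choices $\act_2\in\Act_\defender$ the resulting history $\hist$ ends up satisfying $\hist\in\semword{\att}{\arena}$ by the time the loop terminates (once a branch ``breaks'', it stops extending but keeps the current history for the final test). Unfolding all the universal branches of such a run yields precisely a finite s-tree $\stree$ over $\Pos$: its root is $\pos_0$, the children of a node $\hist$ are the histories $\hist\cdot\arenaf(last(\hist),(\act_1,\act_2))$ as $\act_2$ ranges over $\Act_\defender$ with $\act_1=\sigma(\hist)$ fixed, and a node is a leaf exactly when $\sigma$ chose to break there or the depth bound was hit. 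By construction this $\stree$ is well-formed (it is a prefix of $\stree^{\sigma'}_{\pos_0}$ for the strategy $\sigma'$ extending $\sigma$ arbitrarily past the break points) and every leaf lies in $\semword{\att}{\arena}$, so $\stree\in\sem{\att}{\arena}$; hence an accepting run implies $\sem{\att}{\arena}\neq\emptyset$. Conversely, if $\sem{\att}{\arena}\neq\emptyset$ then by \Cref{Lem:stratlength} there is a witness $\stree\in\sem{\att}{\arena}$ of depth $d\leq|\Pos|\times n$; the existential player reads off its strategy from $\stree$ (play the unique action leading into the subtree, break when a leaf of $\stree$ is reached), and because $d\leq|\Pos|\times n$ every branch reaches a leaf of $\stree$ before the loop counter expires, so every universal branch of the run passes the final membership test. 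This gives an accepting run, completing the equivalence.

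The main obstacle I anticipate is making the correspondence between accepting alternating computations and well-formed $\att$-s-trees fully rigorous, in particular handling the ``break or not'' bit cleanly: one must check that histories frozen by a break (rather than by hitting the depth bound) still assemble into a \emph{prefix} of a genuine strategy tree in the sense of \Cref{def:prefix}, i.e.\ that a node is either a leaf or retains \emph{all} its arena-successors under the fixed $\act_1$. This is exactly why the universal guess over $\Act_\defender$ must be taken over the \emph{full} action set $\Act_\defender$ and not pruned, and it is also where \Cref{Lem:stratlength} is essential, since without the depth bound $|\Pos|\times n$ the loop could run forever and the argument that the existential player can always reach a leaf of the chosen witness in time would break down.
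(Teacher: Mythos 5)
Your proof is correct and follows essentially the same route as the paper's: the complexity bound comes from the $|\Pos|\times n$ loop bound together with \Cref{Thm:polytimemem} for the final membership test, and correctness is established by the two-way correspondence between accepting alternating runs and well-formed $\att$-s-trees, with \Cref{Lem:stratlength} supplying the depth bound for the converse direction. Your treatment is in fact more detailed than the paper's, in particular in making explicit why the ``break''-frozen histories still assemble into a prefix of a genuine strategy tree.
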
 

\begin{proof}

We start by showing the complexity of the algorithm, then we show its correctness.

From the loop at Line 4, it is executed polynomially many times in the size of the input attack tree and of the game arena. We also know (\Cref{Thm:polytimemem}) that the condition $\hist \in \semword{\att}{\arena}$ at Line 10 can be evaluated in polynomial, therefore, Algorithm \ref{Algo:emptymulti} is polynomial-time alternating.

Assume Algorithm \ref{Algo:emptymulti} returns $True$, then, for each choice made by universal guess, there exists a choice made by existential guess guaranteeing that the obtained history is in $\semword{\att}{\arena}$. As a consequence, the choices made by the existential guesses reflect a strategy in the game arena that satisfies $\att$ so, $\sem{\att}{\arena}\neq \emptyset$. Conversely, if $\sem{\att}{\arena}\neq \emptyset$, then there exists (by Lemma \ref{Lem:stratlength}) an s-tree $\stree$ of depth $\leq |Pos| \times n$. Thus the existential guesses can simply follow the strategy given by $\stree$ and then choose to go out from the main loop by the "break" command at Line 5 of Algorithm \ref{Algo:emptymulti} whenever the sequence of choices (existential and universal) in the execution is reflected by a full branch of the s-tree $\stree$.

\end{proof}

For the \PSPACE-hardness of SNE, our construction is inspired by the one in \cite{audinot2017my}: the authors reduce (in polynomial time) the SAT problem to the PNE problem with attack trees (using preconditions). In fact, even if in that paper, authors use attack trees with preconditions, we can adapt it without preconditions. We can even cast the approach to QBF that we first recall:

\begin{defn}
The \emph{quantified Boolean formula} (QBF) is the following decision problem:
\newline \textbf{Input:} a formula of the form $Q_1 x_1, ..., Q_n x_n \psi(x_1, ..., x_n)$ with $Q_i \in \{\exists, \forall\}$ and $\psi$ a Boolean formula in conjunctive normal form over propositions $x_1, ..., x_n$.
\newline \textbf{Output}: $Yes$ if the input formula is true, $No$ otherwise.
\end{defn}

\begin{lem}
\label{Lem:qbf}
The QBF problem can be reduced to SNE in polynomial time.
\end{lem}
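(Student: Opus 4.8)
The plan is to reduce QBF to SNE by encoding, for a fixed QBF instance $Q_1 x_1 \cdots Q_n x_n\, \psi(x_1,\dots,x_n)$ with $\psi = C_1 \wedge \cdots \wedge C_k$ in CNF, a game arena $\arena_\psi$ and an attack tree $\att_\psi$ such that $\sem{\att_\psi}{\arena_\psi} \neq \emptyset$ if and only if the QBF is true. The arena simulates the quantifier prefix: play proceeds through $n$ rounds, and in round $i$ a truth value for $x_i$ is chosen -- by Attacker when $Q_i = \exists$ and by Defender when $Q_i = \forall$. Concretely, I would have positions record the index $i$ of the current variable and enough of the chosen assignment to evaluate clauses; at existential rounds Attacker's action picks $x_i \mapsto \top$ or $x_i \mapsto \bot$ (Defender's action being irrelevant, thanks to the convention that every action is playable everywhere), and at universal rounds Defender's action picks the value while Attacker's action is ignored. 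To keep the arena polynomial I would not store the full assignment in positions; instead, after each variable is set, I would update for each clause $C_j$ a one-bit flag ``$C_j$ already satisfied'', so a position after round $i$ carries only $(i, b_1,\dots,b_k)$ where $b_j$ indicates whether $C_j$ has been made true so far. This is still exponential in $k$ in the worst case, so the cleaner route is to instead thread the clause-satisfaction bookkeeping through the \emph{attack tree} using the $SAND$/$AND$/$OR$ structure, keeping the arena linear in $n+k$ and letting the tree do the Boolean reasoning.

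Following the shape of the construction in \cite{audinot2017my}, the second, preferable design is: the arena has a linear ``assignment gadget'' -- a path of $2n+1$ positions where position $i$ branches into two successors corresponding to $x_i \mapsto \top$ and $x_i \mapsto \bot$, with the branching owned by Attacker on existential rounds and by Defender on universal rounds -- and I label positions with propositions $\mathtt{set}^\top_i, \mathtt{set}^\bot_i$ recording the choice just made, plus a proposition $\mathtt{lit}_\ell$ that holds at exactly those positions where the most recent choice sets literal $\ell$ true. The attack tree is $\att_\psi = SAND(\mathtt{start},\, AND(T_1,\dots,T_k))$ where $T_j = OR(\mathtt{lit}_{\ell} : \ell \in C_j)$, so that a leaf of a winning s-tree witnesses a full assignment under which every clause has some literal satisfied. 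The key point, which I would prove as the correctness lemma, is that a well-formed $\att_\psi$-s-tree from the initial position exists exactly when Attacker has a strategy resolving the existential choices so that, against every resolution of the universal choices by Defender, the resulting assignment satisfies $\psi$ -- i.e.\ exactly when the QBF is true. Here I would lean on \Cref{lem:koning}: a winning strategy corresponds to a finite witness s-tree whose every leaf lies in the path semantics, and the $AND$/$OR$/$SAND$ path semantics of \Cref{Def:semwords} unfolds to precisely ``on every branch, $\mathtt{start}$ occurs and then each clause gets a satisfied literal somewhere along the branch.''

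The steps, in order: (1) define $\arena_\psi$ precisely, including the valuation and the ownership of branching at each level, and check it is constructible in polynomial time; (2) define $\att_\psi$ and note $\sizeatt{\att_\psi} = O(n + \sum_j |C_j|)$; (3) prove soundness -- from a well-formed $\att_\psi$-s-tree $\stree$, extract Attacker's strategy (the branching choices at existential levels are forced to be uniform across the subtree below, which I must argue from well-formedness and the arena's structure) and show it certifies the QBF; (4) prove completeness -- from a winning QBF strategy, build the strategy in the arena, invoke \Cref{lem:koning} (or directly \Cref{Lem:stratlength}) to get a finite witness s-tree, and verify its leaves lie in $\semword{\att_\psi}{\arena_\psi}$ using \Cref{Def:semwords}; (5) conclude the polynomial-time reduction, hence QBF $\leq_p$ SNE.

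The main obstacle I anticipate is step (3): ensuring that a well-formed $\att_\psi$-s-tree genuinely corresponds to a \emph{single} Attacker strategy over the quantifier prefix. Because an s-tree in the semantics need only have its \emph{leaves} in the path semantics and be a prefix of \emph{some} strategy's unfolding, I must design the arena so that the only freedom in the s-tree is exactly Attacker's existential choices and Defender's universal choices -- in particular Attacker cannot ``hedge'' by making different existential choices in different Defender-branches that precede that variable. The fix is to order the arena levels to match the quantifier order $Q_1 x_1, \dots, Q_n x_n$ literally, so that the decision for $x_i$ sits at depth $i$ regardless of earlier choices, and each node at an existential depth has its (unique) branching resolved by Attacker's strategy, forcing a uniform choice in the whole subtree rooted there. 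Getting this interleaving exactly right -- and handling the degenerate cases where $\psi$ is already decided, or where the convention ``every action is playable at every position'' could introduce spurious transitions -- is the delicate part; everything else is a routine induction over the tree and the clause structure.
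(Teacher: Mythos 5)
Your proposal is correct and follows essentially the same route as the paper: a linear, turn-based arena tracing the quantifier prefix in order (Attacker owning existential choices, Defender universal ones), clause-satisfaction propositions on the literal positions, and the attack tree $SAND(start, AND(\cdots))$, with correctness argued via the correspondence between Attacker strategies and QBF Skolem choices. The only cosmetic difference is that you express each clause as an $OR$ over per-literal propositions, whereas the paper folds that disjunction directly into the valuation of a single proposition $p_j$ per clause.
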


It is easy to understand the reduction principle on an example.

\begin{ex}
\label{Ex:qbf}
Consider the formula $\psi=\exists x_1 \forall x_2 \exists x_3, x_1 \land
(x_2 \lor x_3) \land (\lnot x_2 \lor x_3)$. Let $C_1=x_1$, $C_2=(x_2 \lor x_3)$ and $C_3=(\lnot x_2 \lor x_3)$ be the three clauses in $\psi$. The game arena $\arena$
associated with this formula is drawn in Figure \ref{Fig:qbf}: for each position $v_i$ (resp. $\lnot v_i$), the proposition $p_i$ holds if $v_i \in C_i$ (resp. $\lnot v_i \in C_i$). Remark that this game arena is a special case of game arena called turn-based game arena: only one player makes an action in each position, we say that a position belongs to the player who can play on it. We decide classically which position belongs to each player based on quantifiers of $\psi$ (see the proof of \Cref{Lem:qbf} for further explanations). We represent Attacker positions with a circle and Defender positions with a square (position $\pos_3$ and position $\lnot pos_3$ have only one successor position, therefore, it does not matter which player makes the move; by convention, we say they belong to the attacker). Then, $\psi$ holds if, and only if, $\sem{SAND(start, AND(p_1,p_2,p_3))}{\arena}\neq \emptyset$. 
\end{ex}

\begin{figure}
\centering
\subfloat{
\begin{tikzpicture}[scale=0.6,node distance = 0.1cm and 0.5cm]

   \node[state] (q0)   {$Start$}; 
   \node[draw, regular polygon, regular polygon sides=4, minimum size=1.6cm] (a) [above right=of q0] {$\pos_1$};
   \node[draw, regular polygon, regular polygon sides=4, minimum size=1.5cm] (la) [below right=of q0] {$\lnot \pos_1$};
   \node[state] (b) [ right=of a] {$\pos_2$};
   \node[state] (lb) [ right=of la] {$\lnot \pos_2$};
   \node[state] (c) [ right=of b] {$\pos_3$};
   \node[state] (lc) [ right=of lb] {$\lnot \pos_3$};
   
   \node (bu) at (-2,0) {$\{start\}$};
   \node (blu) at (2.5,3.2) {$\{p_1\}$};
   \node (blu') at (2.5,-3.2) {$\emptyset$};
   \node (r) at (5,3.2) {$\{p_2\}$};
   \node (s) at (5,-3.2) {$\{p_3\}$};
   \node (t) at (7.5, 3.2) {$\{p_2,p_3\} $};
   \node (u) at (7.5, -3.2) {$ \emptyset$};

   \path[->] 
    (q0) edge node {} (a)
          edge node {} (la)
    (a) edge node {} (b)
    edge node {} (lb)
    (la) edge node {} (b)
    edge node {} (lb)
    (b) edge node {} (c)
    edge node {} (lc)
    (lb) edge node {} (c)
    edge node {} (lc)
    (c) edge [loop right] node {} ()
    (lc) edge [loop right] node {} ()
    ;
\end{tikzpicture}
}
\subfloat{
\begin{tikzpicture}[scale=1.4]
\node [draw, ellipse] (p) at (0,0) {};
\node [draw, ellipse] (q) at (-1,-1.2) {$start$};
\node [draw, ellipse] (r) at (1,-1.2) {};
\node [draw, ellipse] (s) at (0,-2.4) {$p_1$};
\node [draw, ellipse] (t) at (1,-2.4) {$p_2$};
\node [draw, ellipse] (u) at (2,-2.4) {$p_3$};

\draw[->] (-0.5,-0.6) to[bend right] (0.5,-0.6);
\draw[-] (0.5,-1.8) to[bend right] (1.5,-1.8);
\draw (q) -- (p) -- (r) -- (s);
\draw (t)--(r)--(u);

\end{tikzpicture}
}

\caption{Game arena and attack tree associated to the formula given in Example \ref{Ex:qbf}}
\label{Fig:qbf}
\end{figure}
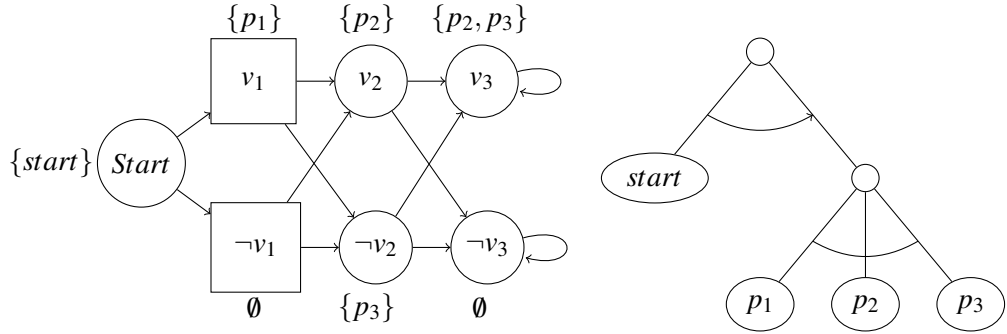

We now start the proof of \cref{Lem:qbf}:

\begin{proof}
Let $Q_1 x_1, ..., Q_n x_n \psi(x_1, ..., x_n)$ with $Q_i \in \{\exists, \forall\}$ and with $\psi$ a Boolean formula over variables $x_1, ..., x_n$ be an instance of the QBF problem. Since $\psi$ is in conjunctive normal form, we can write it as $\psi=\psi_1 \land ... \land \psi_k$ with $\psi_i$ denoting disjunctive clauses containing literals of the form $x_j$ or $\lnot x_j$ with $x_i \in \{x_1, ..., x_n\}$.

We consider the set of propositions $\Prop=\{Start, p_1, ..., p_k\}$
with the following game arena:\break $\arenaa$, where $\Pos=\{Start\} \cup
\{\pos_i | 1 \leq i \leq n\} \cup \{\lnot \pos_i | 1 \leq i \leq
n\}$, $\Act_\attacker = \Act_\defender = \{True,\break False\}$. If $Q_0=\exists$, then position $Start$ is an Attacker position, otherwise, it's a defender position. Moreover, playing action $True$ at position $start$ leads to position $\pos_1$ while playing $False$ leads to position $\lnot pos_1$. Similarly, for each $2 \leq i \leq n$, if $Q_i= \exists$ then $\pos_{i-1}$ and $\lnot\pos_{i-1}$ are Attacker positions, otherwise, they are Defender positions. Furthermore, playing $True$ at position $\pos_{i-1}$ or $\lnot\pos_{i-1}$ leads to position $\pos_i$ while playing $False$ leads to $\lnot\pos_{i}$. Positions $\pos_n$ and $\lnot \pos_n$ are Attacker positions, moreover, the transitions over those two positions are self loops. 

We define $\arenav(Start)=\{Start\}$ and for each $i \leq i \leq n$,
$\arenav(\pos_i)=\{p_j | x_i \in \psi_j\}$ and $\arenav(\lnot
\pos_i)=\{p_j | \lnot x_i \in \psi_j\}$. From this definition, if we
consider that the attacker tries to satisfy the input QBF formula
and the defender tries to prevent it, we have a classic game. We then
only need to show that the objective of the attacker can be well
described using an attack tree, which is the case by considering
$\att=SAND(Start, AND(p_1, ..., p_n))$. Indeed, if there exists a
strategy to satisfy the input QBF formula, then this strategy
satisfies $\psi_1, ..., \psi_k$ and thus, can be executed in the
constructed game arena to achieve $AND(p_1, ..., p_n)$ while starting
at position $Start$, therefore, that strategy is in
$\att$. Conversely, if $\sem{\att}{\arena}\neq\emptyset$, then one of such strategies assures that we satisfy the
input QBF instance.
\end{proof}

By \Cref{Lem:qbf} , SNE is \PSPACE-hard, which achieves the proof of \Cref{thm:sne}.


We now turn to PNE.

\begin{thm}
PNE is \NP-complete.
\end{thm}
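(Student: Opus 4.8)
The approach is to establish membership in \NP\ and \NP-hardness separately, exploiting the compositional nature of the path semantics. For membership, the plan is to guess a path of bounded length and verify it lies in $\semword{\att}{\arena}$ using \Cref{Thm:polytimemem}. First I would argue that if $\semword{\att}{\arena}\neq\emptyset$ then there is a witnessing path of polynomial length. Since $\semword{\att}{\arena}$ is built from $\concatsync$, $\cup$ and $\andword$ applied to sets of the form $\Pos^*\formula$, any nonempty $\semword{\att}{\arena}$ contains a path obtained by concatenating (synchronously) at most $n$ segments, one per leaf of $\att$, where each segment is a shortest path in the transition system $\trans_\arena$ reaching a state satisfying the corresponding leaf formula --- and shortest such segments have length at most $|\Pos|$. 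Hence a path of length $\leq |\Pos|\times n$ suffices. The \NP\ algorithm then nondeterministically guesses such a path $\path$ (polynomial in the size of the input) and accepts iff $\path\in\semword{\att}{\arena}$, which by \Cref{Thm:polytimemem} is checkable in polynomial time. The one subtlety to get right is that the $\andword$ operator only keeps those paths of one operand that have a prefix in the other; I would check that a single polynomial-length path can still be produced by taking, for an $AND$ node, the longest of the shortest segments reaching each conjunct (so that all the shorter ones are prefixes of it), and more generally by a straightforward induction on the structure of $\att$.

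For \NP-hardness, the plan is to reduce SAT, reusing the turn-based construction from the proof of \Cref{Lem:qbf} but specialized to a purely existential quantifier prefix. Given a CNF formula $\psi=\psi_1\land\dots\land\psi_k$ over variables $x_1,\dots,x_n$, I would build exactly the game arena of the QBF reduction where \emph{every} position belongs to Attacker (equivalently, treat the transition system $\arena$ obtained by merging the two players, since with no Defender positions the arena collapses to a plain transition system): positions $Start, \pos_1,\lnot\pos_1,\dots,\pos_n,\lnot\pos_n$, with $Start$ leading to $\pos_1$ or $\lnot\pos_1$, each $\pos_{i-1}$ and $\lnot\pos_{i-1}$ leading to $\pos_i$ or $\lnot\pos_i$, and self-loops on $\pos_n,\lnot\pos_n$; valuations $\arenav(Start)=\{Start\}$, $\arenav(\pos_i)=\{p_j\mid x_i\in\psi_j\}$, $\arenav(\lnot\pos_i)=\{p_j\mid\lnot x_i\in\psi_j\}$. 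Take $\att=SAND(Start, AND(p_1,\dots,p_k))$ --- note this is $p_1,\dots,p_k$, one per clause, not per variable. The claim is that $\psi$ is satisfiable iff $\semword{\att}{\arena}\neq\emptyset$. This is because a path in $\semword{\att}{\arena}$ must start at $Start$ (the only state satisfying $Start$) and, by the $SAND$ and $AND$ semantics and the remark in the paper that $\andword$ keeps a path iff the other's members are prefixes, such a path corresponds exactly to a choice of truth value for each $x_i$ (the branch $\pos_i$ vs $\lnot\pos_i$) such that along the way every $p_j$ has been made true, i.e.\ every clause $\psi_j$ is satisfied.

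The main obstacle I anticipate is the hardness argument's bookkeeping around the $\andword$ operator: unlike $SAND$ and $OR$, merge is defined by the prefix condition, so I must verify carefully that "$\path$ witnesses $AND(p_1,\dots,p_k)$" really does collapse to "the single state/path reached satisfies all $p_j$ simultaneously, or the $p_j$ are attained along nested prefixes of one linear path." Since the QBF arena is linear (every history is a prefix of the unique maximal run through the chosen literals), all the sub-paths reaching the various $p_j$ are automatically totally ordered by $\prefix$, so the merge of $\semword{p_1}{\arena},\dots,\semword{p_k}{\arena}$ restricted to paths from $Start$ is nonempty exactly when there is one literal-choice path along which all $p_j$ eventually hold --- which is precisely satisfiability of $\psi$. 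I would spell this out as the crux of the correctness proof, and then note that the reduction is clearly polynomial and that combined with the \NP\ membership argument above it yields \NP-completeness of PNE.
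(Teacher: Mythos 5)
Your hardness argument is essentially the paper's: the authors likewise obtain \NP-hardness by taking the turn-based construction of \Cref{Lem:qbf} restricted to a purely existential prefix (which collapses the arena to a transition system) and observe that this recovers the SAT reduction of Audinot et al., with $\att=SAND(Start,AND(p_1,\dots,p_k))$ playing the role of their precondition-based tree; your correctness analysis of the $\andword$ operator along the linear chain (and your correction of the clause/variable indexing) is exactly the bookkeeping the paper leaves implicit. Where you diverge is on membership: the paper simply notes that PNE is a particular case of the non-emptiness problem of \cite{audinot2017my}, already known to be in \NP, whereas you give a self-contained guess-and-check algorithm resting on a polynomial bound for a witnessing path plus \Cref{Thm:polytimemem}. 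That is a legitimate and arguably more informative route, and the bound $|\Pos|\times n$ is the right order of magnitude.

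One sub-step of your membership plan is wrong as stated, though. For an $AND$ node you propose to take ``the longest of the shortest segments reaching each conjunct (so that all the shorter ones are prefixes of it)'': shortest paths reaching states satisfying different conjuncts are in general incomparable under $\prefix$ (they may end in different states reached by different routes), so the shorter ones need not be prefixes of the longest. The correct argument starts from an arbitrary $\path\in\semword{AND(\att_1,\dots,\att_m)}{\trans}$, which by definition of $\andword$ carries a chain of prefixes, one per conjunct; mark the endpoints of these prefixes (and, recursively, the checkpoints contributed by $SAND$), and then remove loops of the transition system between consecutive checkpoints. Since there are at most $n$ checkpoints (one per leaf) and each loop-free segment has length at most $|\Pos|$, this yields the desired short witness while preserving membership. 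With that repair your proof goes through.
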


For the \NP-membership, since our problem is a particular case of the problem discussed in \cite{audinot2017my}, it is at least as easy. For the \NP-hardness we reduce SAT: if we apply the same construction as in the proof of Lemma \ref{Lem:qbf}, since we cannot leave any choice for the defender in a transition system and the path semantics is defined over a transition system and not a game arena, we can reduce formulas of QBF only using $\exists$ operators. In other words, we can reduce SAT. In fact, by doing so, we are doing the exact construction of the proof in \cite{audinot2017my}. Moreover the attack tree with preconditions $AND(<start, \formula_1>, ..., <start, \formula_n>)$ used in that paper is completely equivalent to $SAND(start, AND(\formula_1, ..., \formula_n))$ in our formalism. Thus the proof in \cite{audinot2017my} can be well adapted for our problem.


Lastly, we study SM.

\begin{thm}
\label{thm:SM}
SM is \coNP-complete.
\end{thm}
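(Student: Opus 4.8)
The plan is to derive both bounds from the tools already set up for \Cref{thm:sne}. Recall (\Cref{def:witness} and the discussion around it) that a member of $\sem{\att}{\arena}$ records a winning strategy of Attacker for the objective $\att$; accordingly SM asks, given a game arena $\arena$, an attack tree $\att$, and a finitely presented (say, finite-memory) strategy $\strat$ of Attacker, whether $\strat$ is winning for $\att$, i.e.\ whether every play consistent with $\strat$ has a prefix in $\semword{\att}{\arena}$. I would show this is in \coNP\ by re-using Algorithm~\ref{Algo:emptymulti} with the existential choices stripped away, and \coNP-hard by specialising the reduction behind \Cref{Lem:qbf} to the fragment of QBF with only universal quantifiers, i.e.\ the construction "similar and simpler than the one for SNE".

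For the upper bound I would modify Algorithm~\ref{Algo:emptymulti} as follows: replace the existential guess of Attacker's action by the deterministic assignment $\act_1 \leftarrow \strat(\hist)$ (a polynomial-time step when $\strat$ is given as a finite-memory machine); drop the existential "break" line; after each update of $\hist$, test whether $\hist \in \semword{\att}{\arena}$ — which is polynomial time by \Cref{Thm:polytimemem} — and if so have this branch of the computation accept immediately; finally, run the loop $|\Pos| \cdot |M| \cdot n$ times (with $M$ the memory of $\strat$ and $n$ the number of leaves of $\att$) and have any branch that has not yet accepted reject. The only non-deterministic choices left are the universal guesses of Defender's actions and the running time is polynomial, so this procedure witnesses membership of SM in \coNP\ (equivalently, its complement is in \NP: guess the Defender actions, i.e.\ a bounded-length history consistent with $\strat$, and check in polynomial time via \Cref{Thm:polytimemem} that none of its prefixes lies in $\semword{\att}{\arena}$). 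The one point requiring care is that the loop bound is large enough; this is the analogue of \Cref{Lem:stratlength}. The key observation is that the "progress" of a history through $\att$ — for each subtree of $\att$ whether it has already been achieved, and for each $SAND$-node the index of the child currently being attacked — is a monotone quantity that strictly increases at most $\sizeatt{\att}$ times along any play. Hence if some play consistent with $\strat$ avoided $\semword{\att}{\arena}$ for more than $|\Pos| \cdot |M|$ consecutive steps without a progress increase, it would repeat a triple (position, memory state, progress value), and pumping the cycle in between would produce an infinite play consistent with $\strat$ that never reaches $\semword{\att}{\arena}$, contradicting that $\strat$ is winning. So a winning $\strat$ must "close out" the attack within the bound, and the algorithm is correct.

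For the lower bound I would reduce the validity problem for closed quantified Boolean formulas of the shape $\forall x_1 \cdots \forall x_n\, \psi(x_1,\dots,x_n)$ with $\psi = C_1 \wedge \dots \wedge C_k$ in CNF; this universal fragment of QBF is \coNP-complete (it is exactly validity of CNF formulas). Take the very same game arena as in the proof of \Cref{Lem:qbf} — positions $Start$, $\pos_i$, $\lnot \pos_i$ threading a truth assignment, self-loops at $\pos_n$ and $\lnot \pos_n$, with $\arenav(\pos_i) = \{p_j \mid x_i \in C_j\}$ and $\arenav(\lnot \pos_i) = \{p_j \mid \lnot x_i \in C_j\}$ — except that now \emph{every} position belongs to Defender (this is the universal-only case of that construction), so Defender alone selects the assignment; let $\strat$ be the (essentially unique) Attacker strategy and $\att = SAND(Start, AND(p_1,\dots,p_k))$. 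The same analysis as in the proof of \Cref{Lem:qbf} shows that a play following an assignment $\alpha$ has a prefix in $\semword{\att}{\arena}$ iff it visits a $p_j$-position for every clause $C_j$, i.e.\ iff $\alpha$ satisfies $\psi$. Consequently $\strat$ is winning for $\att$ iff every assignment satisfies $\psi$, i.e.\ iff the input formula is valid. The arena has $O(n)$ positions and $\att$ has $O(|\psi|)$ nodes, so this is a polynomial-time reduction, and \coNP-hardness follows.

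I expect the crux to be the polynomial loop bound in the \coNP\ algorithm: one has to pin down precisely the finite, monotone notion of "progress through the attack tree" sketched above and argue that bounded wandering between progress increases is enough, so that a winning strategy necessarily completes the attack within polynomially many steps or not at all. Once that analogue of \Cref{Lem:stratlength} is in place, the universal-branching variant of Algorithm~\ref{Algo:emptymulti} and the \Cref{Lem:qbf}-style reduction fit together exactly as the corresponding ingredients did for \Cref{thm:sne}.
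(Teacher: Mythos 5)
Your upper bound follows the paper's route exactly: strip the existential guesses out of Algorithm~\ref{Algo:emptymulti}, let the given strategy fix Attacker's moves, keep only the universal branching over Defender's actions, and invoke \Cref{Thm:polytimemem} for the final check. Your pumping argument over triples (position, memory state, progress through $\att$) is a reasonable elaboration of the length bound that the paper leaves at the level of ``an analogue of \Cref{Lem:stratlength}''; nothing to object to there.

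The lower bound, however, has a genuine gap. You reduce from formulas $\forall x_1 \cdots \forall x_n\, \psi$ with $\psi$ in \emph{conjunctive} normal form and call this ``validity of CNF formulas, which is \coNP-complete.'' It is not: a CNF $C_1 \wedge \dots \wedge C_k$ is valid iff every clause $C_j$ contains a complementary pair of literals, which is decidable in linear time. Correspondingly, the instance your reduction actually decides --- ``does every assignment chosen by Defender satisfy every clause, i.e.\ visit some $p_j$-position for each $j$'' with the attack tree $SAND(Start, AND(p_1,\dots,p_k))$ --- is a polynomial-time problem, so no \coNP-hardness follows. The \coNP-complete target has to be \emph{unsatisfiability} of a CNF (equivalently, tautology of a DNF), i.e.\ ``every assignment \emph{falsifies some} clause.'' That objective is an existential choice of clause followed by a conjunction over its literals, so the attack tree must be of the form $SAND(Start, OR(AND(\dots),\dots,AND(\dots)))$, with one $AND$-child per clause $C_j$ collecting propositions that label the positions corresponding to the \emph{negations} of the literals of $C_j$. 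With that change the rest of your argument (all positions belong to Defender, the unique Attacker strategy is winning iff every assignment falsifies some clause iff $\psi$ is unsatisfiable) goes through and the reduction is polynomial; as written, it proves hardness only for a problem in \PTIME.
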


For the membership, we can use the same idea as for the membership of the SNE except that, now, we already know the strategy of the attacker, we thus do not need to use any existential guess for the action of Attacker. In other words, it is equivalent to simply considering Defender choosing a branch of the attack tree and  then verifying if it forms an attack or not. Therefore, we use a variant of Algorithm \ref{Algo:emptymulti} without existential choices, this gives us a \coNP algorithm.

For the hardness, we still use the idea of the construction behind the SNE, but now, we consider that only the actions of Defender matter in the progress of the game arena. This way, we can reduce the UNSAT problem, known to be \coNP-complete, to SM. The UNSAT problem is nothing less than the sub-problem  of the QBF problem where an instance of the problem only uses "$\forall$" quantifiers.

This concludes the discussion over decision problems; our results are summarised in \Cref{table:results}.

\section{Future work}
\label{sec:future}
In this paper, we proposed a strategy semantics for attack trees, useful to tackle some practical questions (SNE and SM) not expressible with standard semantics provided by the literature. The price to pay is to renounce a compositional semantics of attack trees. One way to regain it might be to consider a strategy semantics based on a tree automata: we associate with each attack tree a tree automaton recognising its strategy semantics. This is currently work. Moreover, being able to consider automata recognising the strategy semantics allows us to model attack scenarios with constraints, for example, considering that the attacker cannot perform a given action more than a certain amount of time.

Moreover, we are currently exploring the possibility to expand the path and the strategy semantics to attack-defense trees. The main idea is to consider a counter operator in attack trees. This generalisation could lead to a better understanding of the differences between the strategy semantics and the attack-defence tree formalism.

\bibliography{ref} 

\begin{thebibliography}{10}
\providecommand{\bibitemdeclare}[2]{}
\providecommand{\surnamestart}{}
\providecommand{\surnameend}{}
\providecommand{\urlprefix}{Available at }
\providecommand{\url}[1]{\texttt{#1}}
\providecommand{\href}[2]{\texttt{#2}}
\providecommand{\urlalt}[2]{\href{#1}{#2}}
\providecommand{\doi}[1]{doi:\urlalt{https://doi.org/#1}{#1}}
\providecommand{\eprint}[1]{arXiv:\urlalt{https://arxiv.org/abs/#1}{#1}}
\providecommand{\bibinfo}[2]{#2}

\bibitemdeclare{inproceedings}{audinot2017my}
\bibitem{audinot2017my}
\bibinfo{author}{Maxime \surnamestart Audinot\surnameend},
  \bibinfo{author}{Sophie \surnamestart Pinchinat\surnameend} \&
  \bibinfo{author}{Barbara \surnamestart Kordy\surnameend}
  (\bibinfo{year}{2017}): \emph{\bibinfo{title}{Is my attack tree correct?}}
\newblock In: {\slshape \bibinfo{booktitle}{European Symposium on Research in
  Computer Security}}, \bibinfo{organization}{Springer}, pp.
  \bibinfo{pages}{83--102}, \doi{10.1007/978-3-319-66402-6_7}.

\bibitemdeclare{inproceedings}{chandra1976alternation}
\bibitem{chandra1976alternation}
\bibinfo{author}{Ashok~K \surnamestart Chandra\surnameend} \&
  \bibinfo{author}{Larry~J \surnamestart Stockmeyer\surnameend}
  (\bibinfo{year}{1976}): \emph{\bibinfo{title}{Alternation}}.
\newblock In: {\slshape \bibinfo{booktitle}{17th Annual Symposium on
  Foundations of Computer Science (sfcs 1976)}}, \bibinfo{organization}{IEEE},
  pp. \bibinfo{pages}{98--108}, \doi{10.1109/SFCS.1976.4}.

\bibitemdeclare{misc}{comon2008tree}
\bibitem{comon2008tree}
\bibinfo{author}{Hubert \surnamestart Comon\surnameend}, \bibinfo{author}{Max
  \surnamestart Dauchet\surnameend}, \bibinfo{author}{R{\'e}mi \surnamestart
  Gilleron\surnameend}, \bibinfo{author}{Florent \surnamestart
  Jacquemard\surnameend}, \bibinfo{author}{Denis \surnamestart
  Lugiez\surnameend}, \bibinfo{author}{Christof \surnamestart
  L{\"o}ding\surnameend}, \bibinfo{author}{Sophie \surnamestart
  Tison\surnameend} \& \bibinfo{author}{Marc \surnamestart Tommasi\surnameend}
  (\bibinfo{year}{2008}): \emph{\bibinfo{title}{Tree automata techniques and
  applications}}.

\bibitemdeclare{article}{de2007concurrent}
\bibitem{de2007concurrent}
\bibinfo{author}{Luca \surnamestart De~Alfaro\surnameend},
  \bibinfo{author}{Thomas~A \surnamestart Henzinger\surnameend} \&
  \bibinfo{author}{Orna \surnamestart Kupferman\surnameend}
  (\bibinfo{year}{2007}): \emph{\bibinfo{title}{Concurrent reachability
  games}}.
\newblock {\slshape \bibinfo{journal}{Theoretical computer science}}
  \bibinfo{volume}{386}(\bibinfo{number}{3}), pp. \bibinfo{pages}{188--217},
  \doi{10.1016/j.tcs.2007.07.008}.

\bibitemdeclare{article}{horne2017semantics}
\bibitem{horne2017semantics}
\bibinfo{author}{Ross \surnamestart Horne\surnameend}, \bibinfo{author}{Sjouke
  \surnamestart Mauw\surnameend} \& \bibinfo{author}{Alwen \surnamestart
  Tiu\surnameend} (\bibinfo{year}{2017}): \emph{\bibinfo{title}{Semantics for
  specialising attack trees based on linear logic}}.
\newblock {\slshape \bibinfo{journal}{Fundamenta Informaticae}}
  \bibinfo{volume}{153}(\bibinfo{number}{1-2}), pp. \bibinfo{pages}{57--86},
  \doi{10.3233/FI-2017-1531}.

\bibitemdeclare{inproceedings}{jhawar2015attack}
\bibitem{jhawar2015attack}
\bibinfo{author}{Ravi \surnamestart Jhawar\surnameend},
  \bibinfo{author}{Barbara \surnamestart Kordy\surnameend},
  \bibinfo{author}{Sjouke \surnamestart Mauw\surnameend},
  \bibinfo{author}{Sa{\v{s}}a \surnamestart Radomirovi{\'c}\surnameend} \&
  \bibinfo{author}{Rolando \surnamestart Trujillo-Rasua\surnameend}
  (\bibinfo{year}{2015}): \emph{\bibinfo{title}{Attack trees with sequential
  conjunction}}.
\newblock In: {\slshape \bibinfo{booktitle}{IFIP International Information
  Security and Privacy Conference}}, \bibinfo{organization}{Springer}, pp.
  \bibinfo{pages}{339--353}, \doi{10.1007/978-3-319-18467-8_23}.

\bibitemdeclare{inproceedings}{jurgenson2008computing}
\bibitem{jurgenson2008computing}
\bibinfo{author}{Aivo \surnamestart J{\"u}rgenson\surnameend} \&
  \bibinfo{author}{Jan \surnamestart Willemson\surnameend}
  (\bibinfo{year}{2008}): \emph{\bibinfo{title}{Computing exact outcomes of
  multi-parameter attack trees}}.
\newblock In: {\slshape \bibinfo{booktitle}{OTM Confederated International
  Conferences" On the Move to Meaningful Internet Systems"}},
  \bibinfo{organization}{Springer}, pp. \bibinfo{pages}{1036--1051},
  \doi{10.1007/978-3-540-88873-4_8}.

\bibitemdeclare{inproceedings}{kordy2010foundations}
\bibitem{kordy2010foundations}
\bibinfo{author}{Barbara \surnamestart Kordy\surnameend},
  \bibinfo{author}{Sjouke \surnamestart Mauw\surnameend},
  \bibinfo{author}{Sa{\v{s}}a \surnamestart Radomirovi{\'c}\surnameend} \&
  \bibinfo{author}{Patrick \surnamestart Schweitzer\surnameend}
  (\bibinfo{year}{2010}): \emph{\bibinfo{title}{Foundations of attack--defense
  trees}}.
\newblock In: {\slshape \bibinfo{booktitle}{International Workshop on Formal
  Aspects in Security and Trust}}, \bibinfo{organization}{Springer}, pp.
  \bibinfo{pages}{80--95}, \doi{10.1007/978-3-642-19751-2_6}.

\bibitemdeclare{article}{kordy2014attack}
\bibitem{kordy2014attack}
\bibinfo{author}{Barbara \surnamestart Kordy\surnameend},
  \bibinfo{author}{Sjouke \surnamestart Mauw\surnameend},
  \bibinfo{author}{Sa{\v{s}}a \surnamestart Radomirovi{\'c}\surnameend} \&
  \bibinfo{author}{Patrick \surnamestart Schweitzer\surnameend}
  (\bibinfo{year}{2014}): \emph{\bibinfo{title}{Attack--defense trees}}.
\newblock {\slshape \bibinfo{journal}{Journal of Logic and Computation}}
  \bibinfo{volume}{24}(\bibinfo{number}{1}), pp. \bibinfo{pages}{55--87},
  \doi{10.1093/logcom/exs029}.

\bibitemdeclare{inproceedings}{mauw2005foundations}
\bibitem{mauw2005foundations}
\bibinfo{author}{Sjouke \surnamestart Mauw\surnameend} \&
  \bibinfo{author}{Martijn \surnamestart Oostdijk\surnameend}
  (\bibinfo{year}{2005}): \emph{\bibinfo{title}{Foundations of attack trees}}.
\newblock In: {\slshape \bibinfo{booktitle}{International Conference on
  Information Security and Cryptology}}, \bibinfo{organization}{Springer}, pp.
  \bibinfo{pages}{186--198}, \doi{10.1007/11734727_17}.

\bibitemdeclare{incollection}{paul2015automata}
\bibitem{paul2015automata}
\bibinfo{author}{Soumya \surnamestart Paul\surnameend},
  \bibinfo{author}{Ramaswamy \surnamestart Ramanujam\surnameend} \&
  \bibinfo{author}{Sunil \surnamestart Simon\surnameend}
  (\bibinfo{year}{2015}): \emph{\bibinfo{title}{Automata and compositional
  strategies in extensive form games}}.
\newblock In: {\slshape \bibinfo{booktitle}{Models of Strategic Reasoning}},
  \bibinfo{publisher}{Springer}, pp. \bibinfo{pages}{174--201},
  \doi{10.1007/978-3-662-48540-8_6}.

\bibitemdeclare{inproceedings}{pinchinat2019attack}
\bibitem{pinchinat2019attack}
\bibinfo{author}{Sophie \surnamestart Pinchinat\surnameend},
  \bibinfo{author}{Barbara \surnamestart Fila\surnameend},
  \bibinfo{author}{Florence \surnamestart Wacheux\surnameend} \&
  \bibinfo{author}{Yann \surnamestart Thierry-Mieg\surnameend}
  (\bibinfo{year}{2019}): \emph{\bibinfo{title}{Attack trees: a notion of
  missing attacks}}.
\newblock In: {\slshape \bibinfo{booktitle}{International Workshop on Graphical
  Models for Security}}, \bibinfo{organization}{Springer}, pp.
  \bibinfo{pages}{23--49}, \doi{10.1007/978-3-030-36537-0_3}.

\bibitemdeclare{inproceedings}{pinchinat2020library}
\bibitem{pinchinat2020library}
\bibinfo{author}{Sophie \surnamestart Pinchinat\surnameend},
  \bibinfo{author}{Fran{\c{c}}ois \surnamestart Schwarzentruber\surnameend} \&
  \bibinfo{author}{S{\'e}bastien \surnamestart L{\^e}~Cong\surnameend}
  (\bibinfo{year}{2020}): \emph{\bibinfo{title}{Library-Based Attack Tree
  Synthesis}}.
\newblock In: {\slshape \bibinfo{booktitle}{International Workshop on Graphical
  Models for Security}}, \bibinfo{organization}{Springer}, pp.
  \bibinfo{pages}{24--44}, \doi{10.1007/978-3-030-62230-5_2}.

\bibitemdeclare{article}{saffidine2019packed}
\bibitem{saffidine2019packed}
\bibinfo{author}{Abdallah \surnamestart Saffidine\surnameend},
  \bibinfo{author}{S{\'e}bastien~L{\^e} \surnamestart Cong\surnameend},
  \bibinfo{author}{Sophie \surnamestart Pinchinat\surnameend} \&
  \bibinfo{author}{Fran{\c{c}}ois \surnamestart Schwarzentruber\surnameend}
  (\bibinfo{year}{2019}): \emph{\bibinfo{title}{The Packed Interval Covering
  Problem is NP-complete}}.
\newblock {\slshape \bibinfo{journal}{arXiv preprint arXiv:1906.03676}}.

\bibitemdeclare{article}{schneier1999attack}
\bibitem{schneier1999attack}
\bibinfo{author}{Bruce \surnamestart Schneier\surnameend}
  (\bibinfo{year}{1999}): \emph{\bibinfo{title}{Attack trees}}.
\newblock {\slshape \bibinfo{journal}{Dr. Dobb’s journal}}
  \bibinfo{volume}{24}(\bibinfo{number}{12}), pp. \bibinfo{pages}{21--29},
  \doi{10.1002/9781119183631.ch21}.

\bibitemdeclare{article}{widel2019beyond}
\bibitem{widel2019beyond}
\bibinfo{author}{Wojciech \surnamestart Wide{\l}\surnameend},
  \bibinfo{author}{Maxime \surnamestart Audinot\surnameend},
  \bibinfo{author}{Barbara \surnamestart Fila\surnameend} \&
  \bibinfo{author}{Sophie \surnamestart Pinchinat\surnameend}
  (\bibinfo{year}{2019}): \emph{\bibinfo{title}{Beyond 2014: Formal Methods for
  Attack Tree--based Security Modeling}}.
\newblock {\slshape \bibinfo{journal}{ACM Computing Surveys (CSUR)}}
  \bibinfo{volume}{52}(\bibinfo{number}{4}), pp. \bibinfo{pages}{1--36},
  \doi{10.1145/3331524}.

\end{thebibliography}


\end{document}